

\documentclass[10pt]{article}

\usepackage[margin=1in]{geometry}
\usepackage{amsthm}
\usepackage{amsmath}
\usepackage{graphicx}
\usepackage{comment}
\usepackage{subfigure}
\usepackage{xfrac}
\usepackage[usenames, dvipsnames]{color}
\usepackage{color}
\usepackage{multicol}

\usepackage{ctable}
\usepackage{longtable}
\usepackage{lipsum}
\usepackage{float}
\usepackage{moreverb}
\usepackage[utf8]{inputenc}
\usepackage[english]{babel}
\usepackage{natbib}
\usepackage{multirow}
\usepackage{dcolumn}
\usepackage{siunitx}
\usepackage{dcolumn,booktabs}
\usepackage{array}
\usepackage{caption} 
\usepackage{booktabs}
\usepackage{amsfonts}
\usepackage{adjustbox}
\usepackage{inputenc}
\usepackage{diagbox}
\usepackage{rotating}
\usepackage{pdflscape}
\usepackage{array}
\usepackage[normalem]{ulem}
\usepackage{bm}
\usepackage{bbm}
\usepackage{authblk}
\usepackage{hyperref}
\usepackage{natbib}
\allowdisplaybreaks


\newtheorem{theorem}{Theorem}[section]

\newtheorem{assumption}{Assumption}[section]

\raggedbottom

\begin{document}

\title{Nonparametric Method for Clustered Data in Pre-Post Factorial Design}

\author[1]{Solomon W. Harrar}
\author[2,*]{Yue Cui}
\affil[1]{Department of Statistics, University of Kentucky, Lexington, Kentucky, United States}
\affil[2]{Department of Mathematics, Missouri State University, Springfield, Missouri, United States.}
\affil[*]{Corresponding author: YueCui, YueCui@MissouriState.edu}
\renewcommand\Affilfont{\itshape\small}

\date{}

\maketitle

\begin{abstract}
	In repeated measures factorial designs involving clustered units, parametric methods such as linear mixed effects models are used to handle within subject correlations. However, assumptions of these parametric models such as continuity and normality are usually hard to come by in many cases. The homoscedasticity assumption is rather hard to verify in practice. Furthermore, these assumptions may not even be realistic when data are measured in a non-metric scale as commonly happens, for example, in Quality of Life outcomes. In this article, nonparametric effect-size measures for clustered data in factorial designs with pre-post measurements will be introduced. The effect-size measures provide intuitively-interpretable and informative probabilistic comparisons of treatment and time effects. The dependence among observations within a cluster can be arbitrary across treatment groups. The effect-size estimators along with their asymptotic properties for computing confidence intervals and performing hypothesis tests will be discussed. ANOVA-type statistics with $\chi^2$ approximation that retain some of the optimal asymptotic behaviors in small samples are investigated. Within each treatment group, we allow some clusters to involve observations measured on both pre and post intervention periods (referred to as complete clusters), while others to contain observations from either pre or post intervention period only (referred to as incomplete clusters). Our methods are shown to be, particularly effective in the presence of multiple forms of clustering. The developed nonparametric methods are illustrated with data from a three-arm Randomized Trial of Indoor Wood Smoke reduction. The study considered two active treatments to improve asthma symptoms of kids living in homes that use wood stove for heating.
\end{abstract}

\textbf{keywords}: {Dependent replicates, Factorial design, Repeated measures, Missing data, Nonparametric relative effect}

\section{Introduction}  

Factorial designs with pre-post repeated measures are frequently used layouts in experimental science. The data from these studies are typically analyzed by means of parametric procedures, e.g. linear mixed effects model. However, the corresponding parametric assumptions are typically hard to meet in practice. Further, the classic parametric models become more inappropriate if non-metric data such as binary or ordinal data are observed, in the sense that group means are meaningless in describing effects. In more complicated designs where clusters are used as sampling or experimental units in repeated measures factorial designs, the problem becomes more challenging. In this setup, the clusters may be assumed to be independent within factor level combinations but the observations on the sub-units or members of the same clusters cannot, in general, be independent. 

Recently, a number of researches have been done to generalize the classical nonparametric methods to  factorial designs. \cite{akritas1994fully} extended the hypothesis of no treatment effect $H_0^F:F_1=F_2$ from the two sample problem to factorial designs by using an appropriate contrast matrix $\textbf{C}$ and expressing the hypothesis of interest as $H_0^F(\textbf{C}):\textbf{CF}=\bm{0}$, where $\textbf{F}=(F_1,\cdots,F_d)^\top$ denotes the vector of distribution functions. However, it is well-known that hypotheses that are constructed based on distribution functions are restrictive in the sense that data sets with heteroscedasticity are not included. Also, obviously that hypotheses formulated in terms of $\textbf{F}$ have very limited interpretability and do not provide informative effect size measures. All these factors motivated researchers to find alternative ways of constructing and testing hypotheses, which do not only maintain favorable properties of nonparametric tests like free from distribution assumptions, but also offer intuitively meaningful interpretation of treatment and time effects. Generalization of the Wilcoxon-Mann-Whitney (WMW) effect \citep{WMW-1947} 
$$w=P(X<Y)+\frac{1}{2}P(X=Y)$$
to factorial designs allowed us to make significant stride in this direction. This WMW effect $w$ quantifies the tendency of $X$ to be smaller than $Y$ if $p > 1/2$ and $X$ to be larger than $Y$ if $p < 1/2$.
On the other hand, X does not tend to be larger nor smaller than Y if $p = 1/2$. \cite{brunner2017rank} extended the hypotheses of no treatment effect $H_0^w:w=\frac{1}{2}$ in the one-sample problem to factorial designs as $H_0^p:\textbf{Cp}=\bm{0}$, where $\textbf{p}=(p_1,\cdots,p_d)^\top$ denotes the vector of the unweighted relative effects $p_i=\int GdF_i$ and $G=d^{-1}\sum_{i=1}^{d}F_i$. There is also an extension of the nonparametric method to  repeated measures in factorial designs in \cite{brunner2017rank}, which allows the subjects to have repeated measurements at multiple time points within each factor level combination. However, subjects are assumed to be independent and observed at all time points.

Nevertheless, none of the aforementioned researches are applicable to factorial designs with clustered data. \cite{cui2020} proposed a nonparametric test for clustered data in pre-post intervention designs, where some clusters are allowed to have subunits in either pre or post intervention periods but not both. This data structure is referred to as partially complete clustered data and can be viewed as arising from missing values. {Without the factorial design structure, \cite{DS-2005} gave an extension of the two group problem to multiple group problem under the clustered data setting, where individuals within each cluster may belong to one of $m$ possible groups. This particular data structure could be treated as longitudinal clustered data with $m$ $(m>2)$ time points. Again, this manuscript formulated hypotheses in terms of distribution functions, using the idea of within-cluster resampling \citep{Hoffman2001}.} While \cite{cui2020} constructed hypothesis on the nonparametric effect size, i.e. $H_0:w=\frac{1}{2}$, the extension to multiple group designs or factorial designs is not straightforward. The correlation between subunits in the same cluster along with the numerous possibilities of complete and incomplete cluster allocations makes the covariance matrix of effect size estimator quite complicated. It is our intention to close this gap and develop asymptotic test and confidence interval procedures that possess the describable properties of nonparametric methods.

The remainder of the paper is organized as follows. Section \ref{sec:MotivatingExample} describes the Asthma Randomized Trial of Indoor Wood Smoke (ARTIS)\citep{ARTIS-data} which provides a motivation for the methods developed in this paper. A precise formulation of the statistical model and hypotheses on relative effects are given in Section \ref{sec:NonParmModelandHypothesis}. Point estimator as well as asymptotic properties of the relative effect size vector $\textbf{p}$ are derived in Section \ref{sec:AsymptoticTheory}. Based on the main results in Section \ref{sec:AsymptoticTheory}, ANOVA-type statistics for the hypothesis $H_0^p:\textbf{Tp}=\bm{0}$ with $\chi^2$ and $F$ based small-sample approximations are developed in Section \ref{sec:TestStat}. Also in Section \ref{sec:TestStat}, an asymptotic confidence intervals for effect sizes are developed. Section \ref{sec:SimuStudy} reports simulation results for the methods proposed in the paper under various realistic settings. The ARTIS data is analyzed by both the proposed nonparametric approach and linear mixed effects model in Section \ref{sec:AnalysisAndComparison}. Discussion and some concluding remarks are provided in Section \ref{sec:DiscussionConclusion}. All technical details are placed in the Appendix.

\section{Motivating Example}
\label{sec:MotivatingExample}
A relevant example is the Asthma Randomized Trial of Indoor wood Smoke (ARTIS) \citep{ARTIS-data,ward2017efficacy,noonan2012asthma}. In this trial, Pediatric Asthma Quality of Life Questionnaire (PAQLQ) was employed to asses the  primary outcome. PAQLQ is a 23-item asthma-specific battery in three domains: symptoms (10 items), activity limitation (5 items), and emotional function (8 items). Response options for each item are on a 7-point scale where 1 indicates maximum impairment and 7 indicates no impairment. In the trial, {98} houses (possibly with multiple children) were randomized to the three interventions: placebo, wood-stove and air-filter. 
In the homes of placebo group, a sham air-filter was installed.  Real air-filter devices were installed in the homes in the air-filter group. The households assigned to the wood-stove group had their wood-stoves upgraded. Data on quality of life were captured using PAQLQ at each of the four visits in  the winters before and after the intervention, which will hereinafter be referred to as pre- and post-intervention periods.  Here, cluster correlation is induced at three levels of clustering: children clustered in the same household, pre- and post-intervention measurements, and  multiple visits in the pre- and post intervention periods.  Therefore, one challenge with ARTIS data is that there are multiple hierarchies of cluster correlation. Another challenge is  the presence of missing values. Responses on some of the PAQLQ items are missing for some  {participants} and  no visit data is available for some of the houses in either the pre- or post-intervention period. 

The objective of the study is to monitor the difference in PAQLQ scores across placebo, wood-stove and air-filter intervention groups, as well as the pattern of changes in scores in pre- and post-intervention periods.  We adopt the terminologies used in  \cite{ARTIS-data} and describe these changes as intervention effect and winter effect (or time effect in the more general case), respectively. Naturally, we also want to see whether the intervention effect interacts with time effect. Note that the outcomes {of interest in this study} are ordered categorical type and, therefore, analyzing them using mean-based or other parametric effect size measures {as in \cite{ward2017efficacy}} would be inappropriate. In the remainder of the paper, we introduce a fully nonparametric approach which not only accommodates non-metric, binary and ordinal data seamlessly, but also provides informative nonparametric effect size measure. The primary interest is in estimating nonparametric intervention and time effects, and testing whether the effects of the interventions, time and their interaction are significant.


\section{Nonparametric Model and Hypothesis}
\label{sec:NonParmModelandHypothesis}
\subsection{Statistical Model}
\label{sec:StatisticalModel}
In this section, we describe the nonparametric model for the general factorial design setup with pre- and post-intervention measurements. Suppose there are $T$ intervention groups and assume there are $n_j$ independent clusters in the $j^{th}$ intervention group. The clusters may be partitioned into two groups containing $n_j^{(c)}$ complete cases and $n_{j1}+n_{j2}$ incomplete cases. A cluster is identified as \textit{complete} if it contains data both before and after the intervention, while a cluster is identified as \textit{incomplete} if it only has data collected either before or after the intervention but not both. This data structure is called partially complete clustered data \citep{cui2020}. 
Here, $n_{j}^{(c)}$ is the number of complete clusters in the $j^{th}$ intervention group and $n_{jl}$ is the number of incomplete clusters in the $j^{th}$ intervention group and during the $l^{th}$ intervention period {(time point)}. Note that $n_j=n_{j}^{(c)}+\sum_{l=1}^{2}n_{jl}$ is the total number of clusters (sample size) in the $j^{th}$ intervention group. 



To represent data for complete or incomplete clusters in a unified way, we distinguish them by adding a superscript identifier. More specifically, the superscript \text{(c)} pertains to complete clusters and \text{(i)} pertains to incomplete clusters. 

Now, denote $X_{jlkv}^{(c)}$ as the response of the $k^{th}$ complete cluster in the $j^{th}$ intervention group during the $l^{th}$ intervention period and at the $v^{th}$ visit. Also let $X_{jlkv}^{(i)}$ to be the response for incomplete clusters defined similarly. The responses for the $k^{th}$ complete cluster in the $j^{th}$ intervention group are assembled into a vector as 
\begin{equation*}
	\label{CompleteModel}
	\bm{X}_{jk}^{(c)}=(\bm{X}_{j1k}^{(c)\top},\bm{X}_{j2k}^{(c)\top})^\top=(X_{j1k1}^{(c)},\cdots,X_{j1km_{j1k}^{(c)}}^{(c)},X_{j2k1}^{(c)},\cdots,X_{j2km_{j2k}^{(c)}}^{(c)})^\top, k=1,\ldots,n_{j}^{(c)},
\end{equation*}
where $m_{jlk}^{(c)}$ is the number of subunits (visits) for the $k^{th}$ complete cluster in the $j^{th}$ intervention group during the $l^{th}$ intervention period. Furthermore, the incomplete data are collected in the vectors
\begin{equation*}
	\label{IncompleteModel}
	\bm{X}_{jlk}^{(i)}=(X_{jlk1}^{(i)},\cdots,X_{jlkm_{jlk}^{(i)}}^{(i)})^\top, l=1,2,k=1,\ldots,n_{jl},
\end{equation*}
where  $m_{jlk}^{(i)}$ is the number of subunits (visits) for the $k^{th}$ incomplete cluster in the $j^{th}$ intervention group during the $l^{th}$ intervention period. For the derivation of asymptotic theories in Section \ref{sec:AsymptoticTheory}, we further denote \[N_j=\sum_{l=1}^{2}N_{jl}=\sum_{l=1}^{2}(\sum_{k=1}^{n_{j}^{(c)}}m_{jlk}^{(c)}+\sum_{k'=1}^{n_{jl}}m_{jlk'}^{(i)})=\sum_{l=1}^{2}(N_{jl}^{(c)}+N_{jl}^{(i)})\]
as the total number of observations in the $j^{th}$ intervention group, where $N_{jl}$ represents the number of observations during the $l^{th}$ intervention period and $N_{jl}^{(A)}$, $A\in\{c,i\}$ represents the sample sizes in complete or incomplete clusters, respectively. In particular, $N=\sum_{j=1}^{T}N_{j}$ represents the total number of observations in the design. The data scheme is displayed in Table \ref{Table:Structure}, where the rows in each intervention group represent data from subunits in the same cluster.
\begin{table}[!htb]
	\footnotesize
	\centering
	\caption{Schematic representation of the data structure.}
	\begin{tabular}{|c|c|c|l|c|c|c|}
		\hline
		& \multicolumn{2}{c|}{Intervention=1} & \multirow{11}{*}{$\cdots$} &  & \multicolumn{2}{c|}{Intervention=T} \\ \cline{1-3} \cline{5-7} 
		{Cluster} & Pre & Post &  & {Cluster} & Pre & Post \\ \cline{1-3} \cline{5-7} 
		1 & $X\cdots X$ & $X\cdots X$ &  & 1 & $X\cdots X$ & $X\cdots X$ \\ \cline{1-3} \cline{5-7} 
		$\vdots$ & $\vdots$ & $\vdots$ &  & $\vdots$ & $\vdots$ & $\vdots$ \\ \cline{1-3} \cline{5-7} 
		$n_{1}^{(c)}$ & $X\cdots X$ & $X\cdots X$ &  & $n_{T}^{(c)}$ & $X\cdots X$ & $X\cdots X$ \\ \cline{1-3} \cline{5-7} 
		$n_{1}^{(c)}+1$ & $X\cdots X$ & MISSING &  & $n_{T}^{(c)}+1$ & $X\cdots X$ & MISSING \\ \cline{1-3} \cline{5-7} 
		$\vdots$ & $\vdots$ & $\vdots$ &  & $\vdots$ & $\vdots$ & $\vdots$ \\ \cline{1-3} \cline{5-7} 
		$n_{1}^{(c)}+n_{11}$ & $X\cdots X$ & MISSING &  & $n_{T}^{(c)}+n_{T1}$ & $X\cdots X$ & MISSING \\ \cline{1-3} \cline{5-7} 
		$n_{1}^{(c)}+n_{11}+1$ & MISSING & $X\cdots X$ &  & $n_{T}^{(c)}+n_{T1}+1$ & MISSING & $X\cdots X$ \\ \cline{1-3} \cline{5-7} 
		$\vdots$ & $\vdots$ & $X\cdots X$ &  & $\vdots$ & $\vdots$ & $\vdots$ \\ \cline{1-3} \cline{5-7} 
		$n_{1}^{(c)}+n_{11}+n_{12}$ & MISSING & $X\cdots X$ &  & $n_{T}^{(c)}+n_{T1}+n_{T2}$ & MISSING & $X\cdots X$ \\ \hline
	\end{tabular}
	\label{Table:Structure}
\end{table}

We further assume that data from different clusters are independent, and
\begin{equation}
	\label{equation:F}
	X_{jlkv}^{(c)},X_{jlkv}^{(i)}\sim F_{jl},
\end{equation}
where $F_{jl}$ is the marginal distribution function for the data in the $j^{th}$ intervention group during the $l^{th}$ intervention period. Note that this assumption is naturally satisfied under the MCAR mechanism. In order to allow for metric, discrete, dichotomous as well as ordinal data in a unified way, we use the normalized distribution function, i.e. $F_{jl}(x)=\frac{1}{2}\{F_{jl}^-(x)+F_{jl}^+(x)\}$, which is the mean of the left-continuous and right-continuous versions of the distribution function. In the context of nonparametric models, the normalized distribution function was first mentioned by \cite{Kruskal-1952} and later used in more general contexts by \cite{Levy-1925}, \cite{Ruymgaart-1980} and \cite{baby-case-2000} for deriving asymptotic properties of rank statistics in the presence of ties.

To describe intervention effects in the nonparametric setting, we shall use the marginal effects of the $j^{th}$ intervention group during the $l^{th}$ intervention period relative to the $r^{th}$ intervention group during the $s^{th}$ intervention period, defined as
\begin{equation}
	\label{equation:wrsjl}
	w_{rsjl}=\int F_{rs}dF_{jl}=P(X_{rs11}^{(A)}<X_{jl11}^{(B)})+\frac{1}{2}P(X_{rs11}^{(A)}=X_{jl11}^{(B)})\quad\text{for}
\end{equation}
\[1\le r,j\le T, 1\le s,l\le2\text{ and }A,B\in\{c,i\}.\]
The nonparametric relative treatment effect of the $j^{th}$ intervention during the $l^{th}$ intervention period with respect to the average distribution is defined by
\begin{equation}
	\label{equation:p_{jl}}
	p_{jl}=\int GdF_{jl},\qquad j=1,\dots,T,\quad l=1,2,
\end{equation}
where $G=(2T)^{-1}\sum_{j=1}^{T}\sum_{l=1}^{2}F_{jl}$ is the unweighted pooled distribution function. Here, it is easy to verify that 
\[p_{jl}=\overline{w}_{\cdot\cdot jl}=\frac{1}{2T}\sum_{r=1}^{T}\sum_{s=1}^{2}w_{rsjl}.\]
In particular, 
$$p_{jl}=P(U_G<X_{jl11})+\frac{1}{2}P(U_G=X_{jl11})$$
for a random variable $U_G\sim G$ which is independent of $X_{jl11}$. It then follows that $p_{jl}>\frac{1}{2}$ means data from distribution $F_{jl}$ tend to be greater than those from the unweighted mean of all the distributions $G$. Assemble the nonparametric effects into a vector as
\begin{equation}
	\label{equation:pvector}
	\textbf{p}=(p_{11},p_{12},\cdots,p_{T1},p_{T2})^\top=\int Gd\textbf{F},
\end{equation}
where $\textbf{F}=(F_{11},F_{12},\cdots,F_{T1},F_{T2})^\top$. For the derivation of theoretical results in the estimation of $\textbf{p}$ in the subsequent sections, we will need some assumptions on the sample and cluster sizes.
\begin{assumption}
	\label{Assusmption:SampleSize}
	$\underset{1\le j\le T}{\textrm{min}}(n_j^{(c)}+n_{jl})\rightarrow\infty$ such that $\frac{n}{n_j^{(c)}+n_{jl}}\le N_0<\infty$ for $j=1,\cdots,T.$
\end{assumption}
\begin{assumption}
	\label{Assumption:ClusterSize}
	$1\le m_{jlk}^{(A)}\le M$ for $j=1,\cdots,T$, $l=1,2$ and $A\in\{c,i\}$.
\end{assumption}
Assumption \ref{Assusmption:SampleSize} is satisfied if the sample sizes for each intervention period is large in all intervention groups. This can happen, for example, if either the complete cases or both of the incomplete and complete cases are large in each group. Assumption \ref{Assumption:ClusterSize} is rather mild and it ensures all cluster sizes are finite. Note, Assumption \ref{Assusmption:SampleSize} together with Assumption \ref{Assumption:ClusterSize} also stimulate conditions on the total number of observations, i.e. $N_{jl}^{(c)}+N_{jl}^{(i)}\rightarrow\infty$ such that $\underset{1\le j\le T}{\textrm{min}}\frac{N}{N_{jl}^{(c)}+N_{jl}^{(i)}}<\infty$ for $j=1,\cdots,T$. 

The setup described above may give the impression that the paper is dealing with one between-subject with levels $T$. However, the index $j=1,\cdots,T$ is to be viewed as lexicographic order of the between-subject factor level. Therefore, the setup covers factorial designs with crossed and nested factors.

\subsection{Decomposition of the Relative Effects}
\label{sec:Decomposition}
According to the definition of relative treatment effect size in (\ref{equation:p_{jl}}), it is interpreted as follows: if $p_{jl}\le p_{rs}$ the observations in the $j^{th}$ intervention group and during the $l^{th}$ intervention period tend to be smaller than those in the $r^{th}$ intervention group and during the $s^{th}$ intervention period. Another interpretation was introduced in \cite{akritas1994fully} and the supporting information in \cite{de2015regression}. Their ideas are to characterize the relative effects as additive effects obtained by a decomposition of the distribution functions, similar to cell mean decomposition in ANOVA. More precisely, define
\[A_j=\overline{F}_{j\cdot}-G=(1/2)\sum_{l=1}^{2}F_{jl}-G,\quad B_l=\overline{F}_{\cdot l}-G=(1/T)\sum_{j=1}^{T}F_{jl}-G,\]
and
\[(AB)_{jl}=F_{jl}-\overline{F}_{j\cdot}-\overline{F}_{\cdot l}+G,\]
such that $$F_{jl}=G+A_j+B_l+(AB)_{jl}.$$ According to (\ref{equation:p_{jl}}), this decomposition yields an additive effects representation similar to the mean decomposition in classical linear model as
\begin{equation*}
	\label{equation:pjlDecomposition}
	p_{jl}=\int GdF_{jl}=\int GdA_j+\int GdB_l+\int Gd(AB)_{jl}+1/2=\alpha_j+\beta_l+(\alpha\beta)_{jl}+1/2,
\end{equation*}
with side conditions $\sum_{j=1}^{T}\alpha_j=\sum_{l=1}^{2}\beta_l=0$, $\sum_{j=1}^{T}(\alpha\beta)_{jl}=0$ for $l=1,2$ and $\sum_{l=1}^{2}(\alpha\beta)_{jl}=0$ for $j=1,\cdots,T$. Here, the additive effect $\alpha_j$ can be expressed as 
\[\alpha_j=\int Gd\overline{F}_{j\cdot}-1/2=P(U_G<U_{Tj})+1/2P(U_G=U_{Tj})-1/2,\]
where $U_G\sim G$ and $U_{Tj}\sim \overline{F}_{j\cdot}$. Similar interpretations hold for $\beta_l$ and $(\alpha\beta)_{jl}$. Therefore, $\alpha_j>0$ is interpreted as a randomly selected observation $U_{jl}$ from the mean distribution $\overline{F}_{j\cdot}$ under the $j^{th}$ intervention group is more likely to be greater than a randomly selected observation $U_G$ from the mean of all distributions $G$.

\subsection{Hypotheses}
\label{sec:HypothesisSetting}
In this section, we introduce hypotheses that can be used to test intervention, time as well as their interaction effects. The hypotheses of interest in the nonparametric factorial design model described in Section \ref{sec:Decomposition} can generally be stated as
\begin{equation}
	\label{equation:NullHypothesis}
	\textbf{H}_0^p:\textbf{Cp}=\textbf{0},
\end{equation}
where hypotheses matrix \textbf{C} is in the same way as the classical linear model. The hypothesis in (\ref{equation:NullHypothesis}) is equivalent to $H_o^p:\textbf{Tp}=\bm{0}$, where
\begin{equation}
	\label{equation:T}
	\textbf{T}=\textbf{C}^\top(\textbf{C}\textbf{C}^\top)^+\textbf{C}
\end{equation} 
is the unique projection matrix in the hypothesis space, see for example \cite{brunner2001nonparametric} and \cite{brunner2017rank}. For example, in the factorial design with pre- and post-intervention measurements, the contrast matrices for the hypothesis of intervention, time and interact effects are
\begin{enumerate}
	\item[(a)] no intervention effect: $H_0^p(A):\textbf{T}_A\textbf{p}=\bm{0}$ where $\textbf{T}_A=\textbf{P}_T\bigotimes\frac{1}{2}\textbf{J}_2$
	\item[(b)] no time effect: $H_0^p(B):\textbf{T}_B\textbf{p}=\bm{0}$ where $\textbf{T}_B=\frac{1}{T}\textbf{J}_T\bigotimes\textbf{P}_2$ and
	\item[(c)] no interaction effect: $H_0^p(AB):\textbf{T}_{AB}\textbf{p}=\bm{0}$ where $\textbf{T}_{AB}=\textbf{P}_T\bigotimes\textbf{P}_2$,
\end{enumerate}
where $\textbf{P}_a=\textbf{I}_a-(1/a)\textbf{J}_a$ denotes the $a$-dimensional centering matrix. With the decomposition in (\ref{equation:pjlDecomposition}), the above hypotheses can be written in terms of the additive effects, e.g. $H_0^p(A):\textbf{T}_A\textbf{p}=\bm{0}$ is equivalent to $H_0(A):\alpha_1=\cdots=\alpha_T=0$. 

Note that the hypotheses for a special case of the $T\times2$ design with independent observations have been discussed by \cite{boos1992rank}. Also, \cite{akritas1997nonparametric} have discussed in detail the meaning and interpretation of general nonparametric effects in terms of distribution functions. The relative nonparametric effect $p_{jl}$ may be viewed as a measure of an overlap of the distribution function $F_{jl}$ with the mean distribution function $G$, and, thus is to be understood in a manner similar to the nonparametric effect defined in terms of the distribution functions.

\section{Asymptotic Theory}
\label{sec:AsymptoticTheory}
\subsection{Estimator of the Effect Size Vector}
To estimate the effect size vector $\textbf{p}$, we first estimate the relative treatment effect size $p_{jl}$ in (\ref{equation:p_{jl}}). In the presence of dependent subunits (visits), \cite{cui2020} estimate the distribution functions by weighted average of empirical distributions of the data from complete and incomplete clusters where the weights are their respective sample sizes. We utilize the same idea to estimate $F_{jl}$ and, thus, $G$ analogously. 

Denote the cluster-level empirical distributions functions as
\[ \widehat{F}_{jlk}^{(c)}(x)=\frac{1}{m_{jlk}^{(c)}}\sum_{v=1}^{m_{jlk}^{(c)}}c(x-X_{jlkv}^{(c)})\quad \textrm{for}\quad  k=1,\cdots,n_{j}^{(c)}\quad\textrm{and}\]
\[ \widehat{F}_{jlk}^{(i)}(x)=\frac{1}{m_{jlk}^{(i)}}\sum_{v=1}^{m_{jlk}^{(i)}}c(x-X_{jlkv}^{(i)})\quad \textrm{for}\quad k=1,\cdots,n_{jl},\]
for complete and incomplete clusters, respectively, where $c(u)$ denotes the normalized count function, i.e. $c(u)=0,\frac{1}{2},1$ if $u<0,u=0$ or $u>0$. More precisely, $ \widehat{F}_{jlk}^{(c)}$ represents the empirical distribution of the $k^{th}$ complete cluster in the $j^{th}$ intervention group during the $l^{th}$ intervention period and $ \widehat{F}_{jlk}^{(i)}$ is defined similarly for the incomplete clusters. Further, denote $ \widehat{F}_{jl}(x)$ as the empirical distribution of the $j^{th}$ intervention group during the $l^{th}$ intervention period, which is defined by weighted average of empirical distributions of the corresponding complete and incomplete clusters, i.e.
$$ \widehat{F}_{jl}(x)=\theta_{jl} \widehat{F}_{jl}^{(c)}(x)+(1-\theta_{jl}) \widehat{F}_{jl}^{(i)}(x),$$ 
where $\theta_{jl}=\frac{N_{jl}^{(c)}}{N_{jl}}$ and $\widehat{F}_{jl}^{(c)}(x)=\sum_{k=1}^{n_{j}^{(c)}}\frac{m_{jlk}^{(c)}}{N_{jl}^{(c)}} \widehat{F}_{jlk}^{(c)}(x)$, $\widehat{F}_{jl}^{(i)}(x)=\sum_{k=1}^{n_{jl}}\frac{m_{jlk}^{(i)}}{N_{jl}^{(i)}} \widehat{F}_{jlk}^{(i)}(x)$ are the weighted average of the empirical distributions of complete and incomplete clusters, respectively, in the $j^{th}$ intervention group during the $l^{th}$ intervention period, respectively. Equivalently, 
\begin{equation}
	\label{equation:empirical}
	\widehat{F}_{jl}(x)=\frac{1}{N_{jl}}\big\{\sum_{k=1}^{n_{j}^{(c)}}\sum_{v=1}^{m_{jlk}^{(c)}}c(x-X_{jlkv}^{(c)})+\sum_{k=1}^{n_{jl}}\sum_{v=1}^{m_{jlk}^{(i)}}c(x-X_{jlkv}^{(i)})\big\}.
\end{equation}

Let $ \widehat{\textbf{p}}=(\widehat{p}_{11},\widehat{p}_{12},\cdots,\widehat{p}_{T1}, \widehat{p}_{T2})^\top$ be the estimator of the relative effect size vector defined by 
$$\widehat{\textbf{p}}=\int  \widehat{G}d \widehat{\textbf{F}},$$ 
where $ \widehat{\textbf{F}}=(\widehat{F}_{11},\widehat{F}_{12},\cdots,\widehat{F}_{T1}, \widehat{F}_{T2})$ is an estimator of $\textbf{F}$ and $ \widehat{G}=(2T)^{-1}\sum_{j=1}^{T}\sum_{l=1}^{2} \widehat{F}_{jl}$ is an estimator of $G$. Substituting the distribution functions with their empirical counterparts in (\ref{equation:wrsjl}), the relative effect size estimator can be taken as 
$$ \widehat{p}_{jl}= \widehat{\overline{w}}_{\cdot\cdot jl}=\frac{1}{2T}\sum_{r=1}^{T}\sum_{s=1}^{2} \widehat{w}_{rsjl}$$
and
\begin{equation}
	\label{equation:what}
	\widehat{w}_{rsjl}=\int  \widehat{F}_{rs}d \widehat{F}_{jl}=\frac{1}{N_{rs}+N_{jl}}(\overline{R}_{jl\cdot\cdot}^{(rs)}-\overline{R}_{rs\cdot\cdot}^{(jl)})+\frac{1}{2},
\end{equation}
where $R_{jlkv}^{(rs)}$ denotes the mid-rank of $X_{jlkv}$ among all $N_{rs}+N_{jl}$ observations within the two sample vectors $(\bm{X}_{jl1}^\top,\cdots,\bm{X}_{jln_j}^\top)$ and $(\bm{X}_{rs1}^\top,\cdots,\bm{X}_{rsn_r}^\top)$, and $\overline{R}_{jl\cdot\cdot}^{(rs)}=\frac{1}{N_{jl}}\sum_{k=1}^{n_j^{(c)}+n_{jl}}\sum_{v=1}^{m_{jlk}}R_{jlkv}^{(rs)}$ is the mean of ranks. Here, $X_{jlkv}$, $\bm{X}_{jlk}$ and $m_{jlk}$ are generic representations of $X_{jlkv}^{(A)}$, $\bm{X}_{jlk}^{(A)}$ and $m_{jlk}^{(A)}$, $A\in\{c,i\}$, respectively.

\subsection{Asymptotic Distribution}
\label{sec:Asymptotic Distribution}
The asymptotic behavior of $\widehat{\textbf{p}}$ can be studied similarly as \cite{cui2020}, which is a one-sample problem with two intervention periods.
In particular, define the vector 
\[\textbf{w}_{jl}=(w_{jl11},w_{jl12},\cdots,w_{jlT1},w_{jlT2})^\top\]
and the matrix
\[\textbf{W}=(\textbf{w}_{11},\textbf{w}_{12},\cdots\textbf{w}_{T1},\textbf{w}_{T2})\in \mathbb{R}^{2T\times2T}.\]
The empirical counterparts of $\textbf{w}_{jl}$ and $\textbf{W}$ are denoted by $\widehat{\textbf{w}}_{jl}$ and $\widehat{\textbf{W}}$, respectively. It can be shown that
\begin{equation}
	\label{equation:pbyW}
	\textbf{p}=\textbf{E}_{2T}\textrm{vec}(\textbf{W})\quad\textrm{and}\quad \widehat{\textbf{p}}=\textbf{E}_{2T}\textrm{vec}(\widehat{\textbf{W}}).
\end{equation}
Here, vec() denotes the matrix operator which stacks the columns of a matrix on top of each other and the matrix $\textbf{E}_{2T}$ is given by
\[\textbf{E}_{2T}=(2T)^{-1}\mathbbm{1}^\top_{2T}\otimes\textbf{I}_{2T},\]
where $\textbf{I}_{2T}$ denotes the $2T$-dimensional identity matrix, $\mathbbm{1}_{2T}$ denotes the $2T\times1$ vector of all 1's and $\otimes$ denotes the Kronecker product operation. We use the representation of $\widehat{\textbf{p}}$ and $\textbf{p}$ in (\ref{equation:pbyW}) and obtain, 
\begin{equation}
	\label{equation:pinw}
	\sqrt{N}(\widehat{\textbf{p}}-\textbf{p})=\textbf{E}_{2T}\cdot
	\sqrt{N}(\textrm{vec}(\widehat{\textbf{W}})-\textrm{vec}(\textbf{W})).
\end{equation}
Applying the asymptotic equivalence theorem for the components $\widehat{W}_{rsjl}-W_{rsjl}$ of $\widehat{\textbf{W}}-\textbf{W}$ \citep{cui2020}, the random vector $\sqrt{N}(\textrm{vec}(\widehat{\textbf{W}})-\textrm{vec}(\textbf{W}))$ has the same distribution as $\sqrt{N}\textbf{Z}$,
where $\textbf{Z}=(\textbf{Z}_{11}^\top,\textbf{Z}_{12}^\top,\cdots,\textbf{Z}_{T1}^\top,\textbf{Z}_{T2}^\top)^\top$, $\textbf{Z}_{jl}=(Z_{11jl},Z_{12jl},\cdots,Z_{T1jl},Z_{T2jl})^\top$ and
\begin{equation}
	\label{equation:ZinF}
	\begin{split}
		Z_{rsjl}&=\frac{1}{N_{jl}}\sum_{k=1}^{n_{j}^{(c)}}\sum_{v=1}^{m_{jlk}^{(c)}}F_{rs}(X_{jlkv}^{(c)})-\frac{1}{N_{rs}}\sum_{k=1}^{n_{r}^{(c)}}\sum_{v=1}^{m_{rsk}^{(c)}}F_{jl}(X_{rskv}^{(c)})\\
		&\qquad+\frac{1}{N_{jl}}\sum_{k=1}^{n_{jl}}\sum_{v=1}^{m_{jlk}^{(i)}}F_{rs}(X_{jlkv}^{(i)})-\frac{1}{N_{rs}}\sum_{k=1}^{n_{rs}}\sum_{v=1}^{m_{rsk}^{(i)}}F_{jl}(X_{rskv}^{(i)})+1-2w_{rsjl}.\\
	\end{split}
\end{equation}
Then $\sqrt{N}(\widehat{\textbf{p}}-\textbf{p})$ and $\sqrt{N}\textbf{E}_{2T}\textbf{Z}$ have the same distribution follows from (\ref{equation:pinw}).

To facilitate the derivation of the asymptotic distribution, define $$Y_{jlrskv}^{(A)}=F_{rs}(X_{jlkv}^{(A)}),\quad \overline{Y}_{jlrsk\cdot}^{(A)}=\frac{1}{m_{jlk}^{(A)}}\sum_{v=1}^{m_{jlk}^{(A)}}F_{rs}(X_{jlkv}^{(A)}),$$ $$\overline{Y}_{jlrs\cdot\cdot}^{(c)}=\frac{1}{N_{jl}^{(c)}}\sum_{k=1}^{n_{j}^{(c)}}\sum_{v=1}^{m_{jlk}^{(c)}}F_{rs}(X_{jlkv}^{(c)})\quad\text{and}\quad\overline{Y}_{jlrs\cdot\cdot}^{(i)}=\frac{1}{N_{jl}^{(i)}}\sum_{k=1}^{{n_{jl}}}\sum_{v=1}^{m_{jlk}^{(i)}}F_{rs}(X_{jlkv}^{(i)}).$$ 
Then (\ref{equation:ZinF}) can be expressed as
\begin{equation}
	\label{equation:ZinY}
	\begin{split}
		Z_{rsjl}&=\sum_{k=1}^{n_{j}^{(c)}}\bigg[\frac{1}{N_{jl}}m_{jlk}^{(c)}\overline{Y}_{jlrsk\cdot}^{(c)}-\frac{1}{N_{rs}}m_{rsk}^{(c)}\overline{Y}_{rsjlk\cdot}^{(c)}\bigg]+\frac{1}{N_{jl}}\sum_{k=1}^{n_{jl}}m_{jlk}^{(i)}\overline{Y}_{jlrsk\cdot}^{(i)}\\
		&\qquad-\frac{1}{N_{rs}}\sum_{k=1}^{n_{rs}}m_{rsk}^{(i)}\overline{Y}_{rsjlk\cdot}^{(i)}+1-2w_{rsjl}.
	\end{split}
\end{equation}
The proof of the asymptotic equivalence theorem follows from the proof of Theorem 4.2 in \cite{cui2020}. We see from (\ref{equation:ZinY}) that $Z_{rspq}$ is the sum of three independent terms apart from some constants. Each of these terms are sums of independent random variables. Therefore, we get the Central Limit Theorem stated in Theorem \ref{Thm:Distribution}.
\begin{theorem}
	\label{Thm:Distribution}
	Let $\textbf{V}=\textbf{E}_{2T}\textrm{Cov}(\sqrt{N}\textbf{Z})\textbf{E}_{2T}^\top$, then $\sqrt{N}( \widehat{\textbf{p}}-\textbf{p})$ is asymptotically multivariate normally distributed with expectation $\bm{0}$ and covariance matrix $\textbf{V}$ in (\ref{equation:ExplicitCov}).
\end{theorem}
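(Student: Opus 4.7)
The plan is to leverage the reduction already set up in the excerpt: by the asymptotic equivalence theorem inherited from Theorem 4.2 of \cite{cui2020}, $\sqrt{N}(\widehat{\textbf{p}}-\textbf{p})$ has the same limit distribution as $\sqrt{N}\textbf{E}_{2T}\textbf{Z}$, where $\textbf{Z}$ is built from the bounded scores $Y_{jlrskv}^{(A)}=F_{rs}(X_{jlkv}^{(A)})\in[0,1]$ through (\ref{equation:ZinY}). It therefore suffices to establish a multivariate CLT for $\sqrt{N}\textbf{E}_{2T}\textbf{Z}$ and to verify that the resulting covariance is $\textbf{V}$.

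First I would reorganize $\sqrt{N}\textbf{Z}$ as a sum of independent random vectors indexed by clusters. Every term in (\ref{equation:ZinY}) depends on scores from only one cluster, and the constant $1-2w_{rsjl}$ exactly cancels the mean (using $w_{rsjl}+w_{jlrs}=1$), so $\mathbb{E}[\textbf{Z}]=\bm{0}$. Grouping contributions cluster by cluster gives
\[
\sqrt{N}\textbf{Z}=\sum_{j=1}^{T}\sum_{k=1}^{n_j^{(c)}}\bm{\xi}_{jk}^{(c)}+\sum_{j=1}^{T}\sum_{l=1}^{2}\sum_{k=1}^{n_{jl}}\bm{\xi}_{jlk}^{(i)},
\]
where each $\bm{\xi}$ is a centered $(2T)^2$-dimensional random vector whose entries are linear combinations of scores from a single cluster. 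The summands are mutually independent by the cluster-independence assumption, and pre-multiplying by $\textbf{E}_{2T}$ preserves the independent-sum structure.

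Next, I would apply the multivariate Lindeberg-Feller CLT to this triangular array, reducing to one-dimensional functionals via the Cramér-Wold device. Because $F_{rs}(\cdot)\in[0,1]$ and cluster sizes are bounded by $M$ (Assumption~\ref{Assumption:ClusterSize}), each $\bm{\xi}$ has Euclidean norm of order $\sqrt{N}\,M/\min_{j,l}N_{jl}=O(N^{-1/2})$ by Assumption~\ref{Assusmption:SampleSize}. Thus the individual summands shrink uniformly to $0$ while their total number grows like $N$, so the Lindeberg (equivalently, Lyapunov with $\delta=1$) condition is immediate. This yields asymptotic multivariate normality with mean $\bm{0}$, and the limit covariance is identified directly through $\textrm{Cov}(\sqrt{N}\textbf{E}_{2T}\textbf{Z})=\textbf{E}_{2T}\textrm{Cov}(\sqrt{N}\textbf{Z})\textbf{E}_{2T}^\top=\textbf{V}$, with cross-cluster covariances vanishing by independence.

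The hard part is combinatorial bookkeeping rather than probabilistic machinery: to produce the explicit form of $\textbf{V}$ referenced in the theorem, one must classify, for each pair of index quadruples $(r,s,j,l)$ and $(r',s',j',l')$, which clusters are shared between $Z_{rsjl}$ and $Z_{r's'j'l'}$. Complete clusters of group $j=j'$ contribute through both time periods simultaneously (producing within-cluster covariance terms that couple different periods), incomplete clusters contribute only when group and period indices match, and cross-position pairings in $r$ versus $j$ generate additional coupling. Working through these cases carefully yields the explicit expression (\ref{equation:ExplicitCov}); the underlying CLT argument is otherwise standard once boundedness and independence are in hand.
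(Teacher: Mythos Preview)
Your proposal is correct and follows essentially the same route as the paper: reduce to $\sqrt{N}\textbf{E}_{2T}\textbf{Z}$ via the asymptotic equivalence inherited from \cite{cui2020}, use the uniform boundedness of the scores $F_{rs}(X_{jlkv}^{(A)})\in[0,1]$ together with Assumptions~\ref{Assusmption:SampleSize}--\ref{Assumption:ClusterSize} to verify Lindeberg, and pass to the multivariate statement by Cram\'er--Wold. The only cosmetic differences are that the paper organizes each $Z_{rsjl}$ as a sum of three independent blocks (complete clusters, and the two incomplete pieces) rather than your single cluster-indexed sum, and that it phrases the final linear transformation by $\textbf{E}_{2T}$ as an application of the multivariate delta method; since the map is linear, these amount to the same argument.
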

\begin{proof}
	The asymptotic normality of $\sqrt{N}( \widehat{\textbf{p}}-\textbf{p})$ can be established from the asymptotic distribution of the random vector $\textbf{Z}$. Apart from some constants, $Z_{jlrs}$ is the sum of three independent random variables. Since the random variables $\overline{Y}_{jlrsk\cdot}^{(A)}$, $j=1,\cdots,T,l=1,2,A\in\{c,i\}$ are uniformly bounded by Assumption \ref{Assusmption:SampleSize}, asymptotic normality of $Z_{jlrs}$ can be asserted by verifying the Lindeberg's condition. Furthermore, the joint normality of $\textbf{Z}$ is established by the Cram\'er--Wold device. Finally, the multivariate delta method is applied to complete the proof.
	
\end{proof}

\subsection{Structure of the Covariance Matrix}
In this section, we present the explicit expression for the covariance matrix $\bm{V}$ in Theorem \ref{Thm:Distribution}. Note that
\begin{equation}
	\label{equation:ExplicitCov}
	\textbf{V}=N\textbf{E}_{2T}\bm{\Sigma}\textbf{E}_{2T}^\top,
\end{equation}
where $\bm{\Sigma}=\textrm{Cov}(\bm{\textbf{Z}})$. We partition $\bm{\Sigma}$ as $\bm{\Sigma}=(\bm{\Sigma}_{(jl)(rs)})_{1\le j,r\le T,1\le l,s,\le2}\in\mathbb{R}^{2T\times2T}$.
{Set 
	$$\bm{\Sigma}_{(jl)(rs)}=\textrm{Cov}(\textbf{Z}_{jl},\textbf{Z}_{rs})=(\textrm{Cov}(Z_{pqjl},Z_{p'q'rs}))_{1\le p,p'\le T,1\le q,q'\le 2}:=(\sigma_{jlrs}(p,q,p',q'))_{1\le p,p'\le T,1\le q,q'\le 2}.$$ 
As displayed in Appendix \ref{Appendix:CovarianceDecomposition}, each $\sigma_{jlrs}(p,q,p',q')$ can be decomposed into 16 variance components, i.e. $C_1,C_2,\cdots,C_{16}$. We realize the random variables $Z's$ have some favorable properties. First, $Z_{rsrs}=0$ since $w_{rsrs}=1/2$. Further, $Z_{rsjl}=-Z_{jlrs}$ since $w_{rsjl}=1-w_{jlrs}$. Then by independence of $X_{jlkv}^{(A)}$ and $X_{j'l'k'v'}^{(A')}$ for $A\ne A',A,A'\in\{c,i\}$ and for all $j\ne j'$ or $k\ne k'$, it follows that 
\[C_3=C_4=C_{7}=C_{8}=C_{9}=C_{10}=C_{13}=C_{14}=0\] 
so that 
\[\sigma_{jlrs}(p,q,p',q')=C_1-C_2-C_5+C_6+C_{11}-C_{12}-C_{15}+C_{16}.\]
The full expression of $\sigma_{jlrs}(p,q,p',q')$ is rather complicated, we instead investigate each of the variance components individually. Specifically,
	\begin{equation*}
		C_1=
		\begin{cases}
			0&p\ne p'\text{ or }\{p=p'=j\text{ and }l=q\}\text{ or }\{p=p'=r\text{ and } s=q'\}\\
			\tau_p^{(q,q')}(j,l,r,s)&otherwise\\
		\end{cases},
	\end{equation*}
	\begin{equation*}
		C_2=
		\begin{cases}
			0&p\ne r\text{ or }\{p=r=j\text{ and } l=q\}\text{ or }\{p=p'=r\text{ and } q'=s\}\\
			\tau_p^{(q,s)}(j,l,p',q')&otherwise\\
		\end{cases},
	\end{equation*}
	\begin{equation*}
		C_5=
		\begin{cases}
			0&j\ne p'\text{ or }\{p=p'=j\text{ and } q=l\}\text{ or }\{j=r=p'\text{ and } s=q'\}\\
			\tau_j^{(l,q')}(p,q,r,s)&otherwise\\
		\end{cases},
	\end{equation*}
	\begin{equation*}
		C_6=
		\begin{cases}
			0&j\ne r\text{ or }\{p=r=j\text{ and } q=l\}\text{ or }\{j=r=p'\text{ and } q'=s\}\\
			\tau_j^{(l,s)}(p,q,p',q')&otherwise
		\end{cases},
	\end{equation*}
	\begin{equation*}
		C_{11}=
		\begin{cases}
			0&p\ne p'\text{ or } q\ne q'\text{ or }\{p=p'=j\text{ and } q=q'=l\}\text{ or }\{p=p'=r\text{ and } q=q'=s\}\\
			\eta_p^{(q,q')}(j,l,r,s)&otherwise\\
		\end{cases},
	\end{equation*}
	\begin{equation*}
		C_{12}=
		\begin{cases}
			0&p\ne r\text{ or }q\ne s\text{ or }\{p=r=j\text{ and } q=s=l\}\text{ or }\{p=p'=r\text{ and } q=q'=s\}\\
			\eta_p^{(q,s)}(j,l,p',q')&otherwise\\
		\end{cases},
	\end{equation*}
	\begin{equation*}
		C_{15}=
		\begin{cases}
			0&j\ne p'\text{ or }l\ne q'\text{ or }\{p=p'=j\text{ and } q=q'=l\}\text{ or }\{j=r=p'\text{ and } l=s=q'\}\\
			\eta_j^{(l,q')}(p,q,r,s)&otherwise\\
		\end{cases},
	\end{equation*}
and
	\begin{equation*}
		C_{16}=
		\begin{cases}
			0&j\ne r\text{ or }l\ne s\text{ or }\{p=r=j\text{ and } l=s=q\}\text{ or }\{j=r=p'\text{ and } l=s=q'\}\\
			\eta_j^{(l,s)}(p,q,p',q')&otherwise
		\end{cases},
	\end{equation*}
where \[\tau_r^{(s,l)}(p,q,p',q')=\frac{1}{N_{rs}N_{rl}}Cov\bigg(\sum_{k=1}^{n_{r}^{(c)}}\sum_{v=1}^{m_{rsk}^{(c)}}F_{pq}(X_{rskv}^{(c)}),\sum_{k=1}^{n_{r}^{(c)}}\sum_{v=1}^{m_{rlk}^{(c)}}F_{p'q'}(X_{rlkv}^{(c)})\bigg)\]
and
\[\eta_r^{(s,l)}(p,q,p',q')=\frac{1}{N_{s}N_{l}}Cov\bigg(\sum_{k=1}^{n_{rs}}\sum_{v=1}^{m_{rsk}^{(i)}}F_{pq}(X_{rskv}^{(i)}),\sum_{k=1}^{n_{rl}}\sum_{v=1}^{m_{rlk}^{(i)}}F_{p'q'}(X_{rlkv}^{(i)})\bigg).\]
}

\subsection{Estimation of the Asymptotic Covariance Matrix}
\label{subsec:EstimateCov}
In the previous section, we derived the exact expression of $\sigma_{jlrs}(p,q,p',q')$ in terms of $\tau's$ and $\eta's$. Therefore, for the purpose of estimating $\textbf{V}$, we need to estimate $\tau's$ and $\eta's$. 
Due to the independence among clusters, it follows $\eta_r^{(s,l)}(p,q,p',q')=0$ if $s\ne l$. Then, we can rewrite
\begin{equation}
	\label{equation:tau}
	\begin{split}
		\tau_r^{(s,l)}(p,q,p',q')
		&=\frac{1}{N_{rs}N_{rl}}\sum_{k=1}^{n_{r}^{(c)}}Cov\bigg(m_{rsk}^{(c)}\overline{Y}_{rspqk\cdot}^{(c)},m_{rlk}^{(c)}\overline{Y}_{rlp'q'k\cdot}^{(c)}\bigg)\\
		&=\frac{n_{r}^{(c)}}{N_{rs}N_{rl}}\frac{1}{n_{r}^{(c)}}\sum_{k=1}^{n_{r}^{(c)}}Cov\bigg(m_{rsk}^{(c)}\overline{Y}_{rspqk\cdot}^{(c)},m_{rlk}^{(c)}\overline{Y}_{rlp'q'k\cdot}^{(c)}\bigg)\\
		&=\frac{n_{r}^{(c)}}{N_{rs}N_{rl}}\zeta^{2(c)}_{r(s,l)}(p,q,p',q')\\
	\end{split}
\end{equation} 
and
\begin{equation}
	\label{equation:eta}
	\begin{split}
		\eta_r^{(s,s)}(p,q,p',q')
		&=\frac{1}{N_{rs}^2}Cov\bigg(\sum_{k=1}^{n_{rs}}m_{rsk}^{(i)}\overline{Y}_{rspqk\cdot}^{(i)},\sum_{k'=1}^{n_{rs}}m_{rsk}^{(i)}\overline{Y}_{rsp'q'k\cdot}^{(i)}\bigg)\\
		&=\frac{1}{N_{rs}^2}\sum_{k=1}^{n_{rs}}Cov\bigg(m_{rsk}^{(i)}\overline{Y}_{rspqk\cdot}^{(i)},m_{rsk}^{(i)}\overline{Y}_{rsp'q'k\cdot}^{(i)}\bigg)\\
		&=\frac{n_{rs}}{N_{rs}^2}\frac{1}{n_{rs}}\sum_{k=1}^{n_{rs}}Cov\bigg(m_{rsk}^{(i)}\overline{Y}_{rspqk\cdot}^{(i)},m_{rsk}^{(i)}\overline{Y}_{rsp'q'k\cdot}^{(i)}\bigg)\\
		&=\frac{n_{rs}}{N_{rs}^2}\zeta^{2(i)}_{r(s,s)}(p,q,p',q'),\\
	\end{split}
\end{equation} 
where $$\zeta^{2(c)}_{r(s,l)}(p,q,p',q')=\frac{1}{n_{r}^{(c)}}\sum_{k=1}^{n_{r}^{(c)}}Cov\big(m_{rsk}^{(c)}\overline{Y}_{rspqk\cdot}^{(c)},m_{rlk}^{(c)}\overline{Y}_{rlp'q'k\cdot}^{(c)}\big)$$ and $$\zeta^{2(i)}_{r(s,s)}(p,q,p',q')=\frac{1}{n_{rs}}\sum_{k=1}^{n_{rs}}Cov\big(m_{rsk}^{(i)}\overline{Y}_{rspqk\cdot}^{(i)},m_{rsk}^{(i)}\overline{Y}_{rsp'q'k\cdot}^{(i)}\big).$$ 

If $\overline{Y}_{rspqk\cdot}^{(c)}$ and $\overline{Y}_{rspqk\cdot}^{(i)}$ were observable and their expected values were known, then natural estimators of $\zeta^{2(c)}_{r(s,l)}(p,q,p',q')$ and $\zeta^{2(i)}_{r(s,s)}(p,q,p',q')$ would be
\[\tilde{\zeta}^{2(c)}_{r(s,l)}(p,q,p',q')=\frac{1}{n_{r}^{(c)}}\sum_{k=1}^{n_{r}^{(c)}}\left[m_{rsk}^{(c)}\overline{Y}_{rspqk\cdot}^{(c)}-E(m_{rsk}^{(c)}\overline{Y}_{rspqk\cdot}^{(c)})\right]\left[m_{rlk}^{(c)}\overline{Y}_{rlp'q'k\cdot}^{(c)}-E(m_{rlk}^{(c)}\overline{Y}_{rlp'q'k\cdot}^{(c)})\right]\]
and
\[\tilde{\zeta}^{2(i)}_{r(s,s)}(p,q,p',q')=\frac{1}{n_{rs}}\sum_{k=1}^{n_{rs}}\left[m_{rsk}^{(i)}\overline{Y}_{rspqk\cdot}^{(i)}-E(m_{rsk}^{(i)}\overline{Y}_{rspqk\cdot}^{(i)})\right]\left[m_{rsk}^{(i)}\overline{Y}_{rsp'q'k\cdot}^{(i)}-E(m_{rsk}^{(i)}\overline{Y}_{rsp'q'k\cdot}^{(i)})\right].\]
It is easy to verify that $E(m_{rsk}^{(A)}\overline{Y}_{rspqk\cdot}^{(A)})=m_{rsk}^{(A)}w_{pqrs}$ for $A\in\{c,i\}$. Consequently, 
\[\tilde{\zeta}^{2(c)}_{r(s,l)}(p,q,p',q')=\frac{1}{n_{r}^{(c)}}\sum_{k=1}^{n_{r}^{(c)}}m_{rsk}^{(c)}m_{rlk}^{(c)}\left(\overline{Y}_{rspqk\cdot}^{(c)}-w_{pqrs}\right)\left(\overline{Y}_{rlp'q'k\cdot}^{(c)}-w_{p'q'rl}\right)\]
and
\[\tilde{\zeta}^{2(i)}_{r(s,s)}(p,q,p',q')=\frac{1}{n_{rs}}\sum_{k=1}^{n_{rs}}m_{rsk}^{2(i)}\left(\overline{Y}_{rspqk\cdot}^{(i)}-w_{pqrs}\right)\left(\overline{Y}_{rsp'q'k\cdot}^{(i)}-w_{p'q'rs}\right).\]
To construct estimators of the covariances that are computable, we replace the unobservable random variables with observable ones that are asymptotically "close" in probability sense. Therefore, we substitute the distribution functions $F_{rs}$ with their empirical counterparts $ \widehat{F}_{rs}$ in (\ref{equation:empirical}) for all $r=1,\cdots,T$ and $s=1,2$. More specifically, let
$$\widehat{\overline{Y}}_{jlrsk\cdot}^{(A)}=\frac{1}{m_{jlk}^{(A)}}\sum_{v=1}^{m_{jlk}^{(A)}} \widehat{F}_{rs}(X_{jlkv}^{(A)}),\quad j,s\in\{1,\cdots,T\},l,s\in\{1,2\}\text{ and }A\in\{c,i\}$$
be the empirical counterpart of $\overline{Y}_{jlrsk\cdot}^{(A)}$. Naturally, expectations of the unobservable random variables are estimated by replacing $w's$ with $ \widehat{w}'s$ in (\ref{equation:what}). Therefore, reasonable estimators of $\tau's$ and $\eta's$ are
\begin{equation}
	\label{equation:tauhat}
	\widehat{\tau}_r^{(s,l)}(p,q,p',q')=\frac{n_{r}^{(c)}}{N_{rs}N_{rl}}\frac{1}{n_{r}^{(c)}-1}\sum_{k=1}^{n_{r}^{(c)}}m_{rsk}^{(c)}m_{rlk}^{(c)}\left( \widehat{\overline{Y}}_{rspqk\cdot}^{(c)}-\widehat{w}_{pqrs}\right)\left( \widehat{\overline{Y}}_{rlp'q'k\cdot}^{(c)}-\widehat{w}_{p'q'rl}\right)
\end{equation}
and 
\begin{equation}
	\label{equation:etahat}
	\widehat{\eta}^{(s,s)}_{r}(p,q,p',q')=\frac{n_{rs}}{N_{rs}^2}\frac{1}{n_{rs}-1}\sum_{k=1}^{n_{rs}}m_{rsk}^{2(i)}\left(\widehat{\overline{Y}}_{rspqk\cdot}^{(i)}-\widehat{w}_{pqrs}\right)\left(\widehat{\overline{Y}}_{rsp'q'k\cdot}^{(i)}-\widehat{w}_{p'q'rs}\right).
\end{equation}
Replacing the quantities $\tau's$ and $\eta's$ with their estimators $ \widehat{\tau}'s$ and $ \widehat{\eta}'s$ in (\ref{equation:tauhat}) and (\ref{equation:etahat}), we obtain estimators of $\widehat{\sigma}_{rs}(p,q,p',q')$ and $\widehat{\sigma}_{jlrs}(p,q,p',q')$. The resulting estimator of the asymptotic covariance matrix $\bm{\Sigma}$ of $\textbf{Z}$ is denoted by $\widehat{\bm{\Sigma}}_N$=$(\widehat{\bm{\Sigma}}_{(jl)(rs)})_{j,r=1,\cdots,T,l,s=1,2}$. Finally, from equation (\ref{equation:ExplicitCov}), we obtain a consistent estimator $\widehat{\textbf{V}}_N$ of the asymptotic covariance matrix $\textbf{V}$ of $\sqrt{N}(\widehat{\textbf{p}}-\textbf{p})$
\begin{equation}
	\label{equation:ElementInVN}
	\widehat{\textbf{V}}_N=N\textbf{E}_{2T}\widehat{\bm{\Sigma}}_N\textbf{E}_{2T}^\top=(\widehat{v}_{(jl)(rs)})_{1\le j,r\le T,1\le l,s\le2},
\end{equation}
where $\widehat{v}_{(jl)(rs)}=(2T)^{-2}\mathbbm{1}_{2T}^\top\widehat{\bm{\Sigma}}_{(jl)(rs)}\mathbbm
{1}_{2T}$. Consistency of $\widehat{\textbf{V}}_N$ is established in Theorem \ref{Thm:ConsistencyCov}. 
\begin{theorem}
	\label{Thm:ConsistencyCov}
	Under assumptions \ref{Assusmption:SampleSize} and \ref{Assumption:ClusterSize},
	\[\parallel \widehat{\bm{V}}_N-\bm{V}\parallel_2^2\rightarrow0.\]
\end{theorem}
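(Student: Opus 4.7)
The plan is to reduce the matrix norm convergence to the joint convergence of the eight types of scalar variance components $\widehat{\tau}_r^{(s,l)}(p,q,p',q')$ and $\widehat{\eta}_r^{(s,s)}(p,q,p',q')$ defined in (\ref{equation:tauhat}) and (\ref{equation:etahat}). Since $\widehat{\bm{V}}_N = N \textbf{E}_{2T} \widehat{\bm{\Sigma}}_N \textbf{E}_{2T}^\top$ and $\bm{V} = N \textbf{E}_{2T} \bm{\Sigma} \textbf{E}_{2T}^\top$ have fixed dimension $2T \times 2T$, the squared spectral norm $\|\widehat{\bm{V}}_N - \bm{V}\|_2^2$ is bounded by a constant times the maximum of $|\widehat{v}_{(jl)(rs)} - v_{(jl)(rs)}|^2$. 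Each such entry is a constant-coefficient linear combination of the differences $\widehat{\sigma}_{jlrs}(p,q,p',q') - \sigma_{jlrs}(p,q,p',q')$, and by the explicit decomposition of the previous subsection, each of these block-entries further reduces to a finite sum of the scaled differences $\widehat{\tau} - \tau$ and $\widehat{\eta} - \eta$ arising through the non-vanishing components $C_1, C_2, C_5, C_6, C_{11}, C_{12}, C_{15}, C_{16}$. Hence it suffices to show, in probability and for every fixed index tuple, that $\widehat{\tau}_r^{(s,l)} - \tau_r^{(s,l)} \to 0$ and $\widehat{\eta}_r^{(s,s)} - \eta_r^{(s,s)} \to 0$.

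I would establish each convergence by a two-step oracle argument. Define the unobservable oracle estimator
\[
\widetilde{\tau}_r^{(s,l)}(p,q,p',q') := \frac{n_r^{(c)}}{N_{rs} N_{rl}} \cdot \frac{1}{n_r^{(c)}-1} \sum_{k=1}^{n_r^{(c)}} m_{rsk}^{(c)} m_{rlk}^{(c)} (\overline{Y}_{rspqk\cdot}^{(c)} - w_{pqrs})(\overline{Y}_{rlp'q'k\cdot}^{(c)} - w_{p'q'rl}),
\]
and its analogue $\widetilde{\eta}_r^{(s,s)}$, and write $\widehat{\tau} - \tau = (\widehat{\tau} - \widetilde{\tau}) + (\widetilde{\tau} - \tau)$. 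The second term vanishes in probability because its summands are independent across the $n_r^{(c)}$ complete clusters, uniformly bounded (the $\overline{Y}$'s lie in $[0,1]$ and cluster sizes are bounded by Assumption \ref{Assumption:ClusterSize}), and have common mean $\zeta^{2(c)}_{r(s,l)}(p,q,p',q')$ by construction. The weak law of large numbers yields convergence of the empirical average, the $n_r^{(c)}/(n_r^{(c)}-1)$ bias correction tends to $1$, and the prefactor $n_r^{(c)}/(N_{rs}N_{rl})$ stays bounded by Assumption \ref{Assusmption:SampleSize}, so $\widetilde{\tau} - \tau \to 0$ in probability. The $\widetilde{\eta} - \eta$ step is identical.

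The first term $\widehat{\tau} - \widetilde{\tau}$ is the principal difficulty and requires controlling the perturbation introduced by plugging $\widehat{F}_{rs}$ for $F_{rs}$ inside the $\overline{Y}$'s and $\widehat{w}_{pqrs}$ for $w_{pqrs}$. I would first establish uniform consistency of the weighted empirical distribution: $\sup_x |\widehat{F}_{rs}(x) - F_{rs}(x)| \to 0$ in probability. For fixed $x$, expression (\ref{equation:empirical}) exhibits $\widehat{F}_{rs}(x)$ as a normalized sum of independent cluster-level random variables taking values in $[0,1]$, so pointwise consistency follows from the weak law of large numbers; the standard monotone bracketing argument for Glivenko-Cantelli, exploiting monotonicity in $x$ and the uniform bound $m_{rsk}^{(A)} \le M$, upgrades this to uniform convergence. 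This immediately yields $\max_k |\widehat{\overline{Y}}_{rspqk\cdot}^{(A)} - \overline{Y}_{rspqk\cdot}^{(A)}| \to 0$ and $|\widehat{w}_{pqrs} - w_{pqrs}| \to 0$ in probability. Expanding the product $(\widehat{\overline{Y}} - \widehat{w})(\widehat{\overline{Y}}' - \widehat{w}')$ about $(\overline{Y} - w)(\overline{Y}' - w')$, each cross term vanishes by uniform boundedness of the residuals, of $m_{rsk}^{(c)} m_{rlk}^{(c)}$, and of the prefactor, so $\widehat{\tau} - \widetilde{\tau} \to 0$ in probability. The $\widehat{\eta} - \widetilde{\eta}$ step is analogous, and assembling the eight component convergences through the fixed linear map that builds $\widehat{\bm{V}}_N$ from $\widehat{\bm{\Sigma}}_N$ completes the proof.
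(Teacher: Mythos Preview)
Your approach is correct and coincides with the paper's: reduce $\|\widehat{\bm V}_N-\bm V\|_2\to 0$ to consistency of each $\widehat\tau_r^{(s,l)}$ and $\widehat\eta_r^{(s,s)}$, then (as in the referenced argument of \cite{cui2020}) handle the latter via the oracle decomposition $\widehat\tau-\tau=(\widehat\tau-\widetilde\tau)+(\widetilde\tau-\tau)$ together with a Glivenko--Cantelli bound for $\widehat F_{rs}$. Two bookkeeping points deserve tightening. First, because $\widehat{\bm V}_N=N\textbf{E}_{2T}\widehat{\bm\Sigma}_N\textbf{E}_{2T}^\top$, the entries of $\widehat{\bm V}_N-\bm V$ are $N$ times linear combinations of $\widehat\sigma-\sigma$, so what Assumption~\ref{Assusmption:SampleSize} must bound is $N\,n_r^{(c)}/(N_{rs}N_{rl})$ (and $N\,n_{rs}/N_{rs}^2$), not the unscaled prefactor you wrote; your WLLN and perturbation steps go through unchanged once this factor of $N$ is carried along. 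Second, the cluster-level summands $m_{rsk}^{(c)}m_{rlk}^{(c)}(\overline Y_{rspqk\cdot}^{(c)}-w_{pqrs})(\overline Y_{rlp'q'k\cdot}^{(c)}-w_{p'q'rl})$ do not share a common mean---their means depend on the cluster sizes, and $\zeta^{2(c)}_{r(s,l)}$ is only the \emph{average} of those means---but the WLLN for independent uniformly bounded (non-identically distributed) summands still delivers the needed convergence via a Chebyshev bound.
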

\begin{proof}
	To prove the consistency of $\widehat{\bm{V}}_N$, it suffices to prove the consistency of $ \widehat{\tau}_r^{(s,l)}(p,q,p',q')$ and $ \widehat{\eta}^{(s,s)}_{r}(p,q,p',q')$. By Assumption \ref{Assusmption:SampleSize}, it is enough to show consistency of $ \widehat{\zeta}^{2(c)}_{r(s,l)}(p,q,p',q')$ and $ \widehat{\zeta}^{2(i)}_{r(s,s)}(p,q,p',q')$. 
	Refer to the proof of Theorem 4.3 in \cite{cui2020} for special cases, i.e. $(s,p,q)=(l,p',q')$ for $ \widehat{\zeta}^{2(c)}_{r(s,l)}(p,q,p',q')$ and $(p,q)=(p',q')$ for $ \widehat{\zeta}^{2(i)}_{r(s,s)}(p,q,p',q')$. The other cases can be proved analogously.
\end{proof}

\section{Test Statistics and Confidence Interval}
\label{sec:TestStat}
In Theorem \ref{Thm:Distribution}, we showed that $\sqrt{N}( \widehat{\textbf{p}}-\textbf{p})$ is asymptotically normally distributed with mean $\bm{0}$ and covariance matrix $\textbf{V}$. This result allows construction of approximate test procedures for the null hypothesis $H_0^p:\textbf{Cp}=\bm{0}$ introduced in Section \ref{sec:HypothesisSetting}. 

Let $\textbf{M}^+$ denote the Moore-Penrose inverse of a matrix $\textbf{M}$. For testing $H_0^p$, the Wald-type statistic (WTS)
\begin{equation}
	\label{WTS}
	W_N(\textbf{C})=N\widehat{\textbf{p}}^\top\textbf{C}^\top(\textbf{C}\widehat{\bm{V}}_N\textbf{C}^\top)^+\textbf{C}\widehat{\textbf{p}}
\end{equation}
may be utilized. However, as discussed in \cite{ANOVA-1997}, \cite{vallejo2010analysis}, \cite{umlauft2017rank} and \cite{brunner2017rank}, $W_N(\textbf{C})$ suffers from poor control of Type-I error (liberal performance) under small sample sizes. This liberal property is even worse in our case because of the complicated structure of the covariance matrix $\textbf{V}$ that involves a large number of unknown quantities which need to be estimated. {See also \cite{brunner2017rank} for  other problems associated with this test.}  

Another test that is known to perform well in small sample is the ANOVA-type statistic 
\begin{equation}
	\label{equation:Q_N}
	Q_N(\textbf{C})=Q_N(\textbf{T})=\frac{N}{\text{tr}(\textbf{T}\widehat{\textbf{V}}_N)}\widehat{\textbf{p}}'\textbf{T}\widehat{\textbf{p}},
\end{equation}
where $\textbf{T}$ is given in (\ref{equation:T}). Specific test statistics can be derived by plugging in $\textbf{T}_R$ for $\textbf{T}$ in (\ref{equation:Q_N}), where $\textbf{T}_R$, $R\in\{A,B,AB\}$ are given Section \ref{sec:HypothesisSetting}. Note that $H_0^p:\textbf{Tp=0}$ is equivalent to $H_0^p:\textbf{Cp}=\textbf{0}$ since $\textbf{C}^\top(\textbf{C}\textbf{C}^\top)^+$ is a generalized inverse of $\textbf{C}$. 

Following \cite{brunner2017rank}, we consider the ANOVA-Box-type $\chi^2$ approximation procedure where the distribution of $Q_N(\textbf{T})$ is approximated by Box-type approximation (see \cite{Box-1954} and \cite{ANOVA-1997}) as
\begin{equation}
	\label{BoxApproximation}
	\chi_f^2/f,
\end{equation}
where $f$ is estimated by 
\begin{equation}
	\label{fhat}
	\widehat{f}=\frac{\text{tr}^2(\textbf{T}\widehat{\textbf{V}}_N)}{\text{tr}(\textbf{T}\widehat{\textbf{V}}_N\textbf{T}\widehat{\textbf{V}}_N)}.
\end{equation}

Based on the aforementioned approximation procedure, the asymptotic level $\alpha$ test would reject  $H_0^p$ if 
	$\widehat{f}Q_N(\textbf{T})>\chi^2_{\widehat{f},1-\alpha}$ where $\chi^2_{\widehat{f},1-\alpha}$ is the $(1-\alpha)$-quantile of a Chi-squared distribution with $\widehat{f}$ degrees of freedom.
Note $\phi_N$ is consistent for alternatives, see Theorem 5 in \cite{brunner2017rank} for more details.

It is straightforward to derive confidence intervals for the nonparametric effect size $p_{jl}$ in (\ref{equation:p_{jl}}) and contrasts of them from the asymptotic results in Section \ref{sec:AsymptoticTheory}. However, it is well known that these confidence intervals may not be range preserving. As recommended and discussed by a number of papers (see, for example, \cite{matched-pair-2012} and \cite{cui2020}), logit or probit transformations can be used to achieve ranging preserving intervals. For example, an approximate $(1-\alpha)$ confidence interval for $p_{jl}$ can be obtained as the inverse image (with respect to $g$) of the interval  $\widehat{p}_{jl}\pm\frac{z_{1-\alpha/2}}{\sqrt{N}}\sqrt{\widehat{v}_{(jl)(jl)}}g'(\widehat{p}_{jl})$
for any given function $g(\cdot)$ that is differentiable in $p_{jl}$ with $g'(p_{jl})\ne0$, where $\widehat{v}_{(jl)(jl)}$ is given in (\ref{equation:ElementInVN}).

\section{Simulation Study}
\label{sec:SimuStudy}
In this section, we evaluate the small-sample properties of the statistical test ${\phi}_N$ based on the ANOVA-type statistic $Q_N(\textbf
T)$ in equation (\ref{equation:Q_N}). In a small-scale simulation we investigate (a) maintenance of the preassigned Type-I error level ($\alpha=5\%$) and
(b) powers achieved to detect specific alternatives.

\subsection{Simulation Design}
The simulation studies seek to answer questions (a)  and (b) above in a variety of scenarios that cover a set of reasonable model involving strongly- and weakly-correlated clustered data with small and moderate cluster and sample sizes. For brevity, we consider same sample size allocations in all intervention groups, i.e. $(n_j^{(c)},n_{j1},n_{j2})=(n_c,n_{1},n_{2})$ are the same for all $j\in\{1,\cdots,T\}$. Within each intervention group and at each intervention period, the cluster sizes are generated from \textit{Binomial(2,0.3)} and \textit{Binomial(5,0.3)} and 1 is added to the generated numbers to avoid 0. Therefore, the maximum cluster sizes $M$ are 3 and 6 for all the settings. The shape of the data distribution is another feature we investigate in evaluating the performance of the test procedures. Three different multivariate distributions are used to generate clustered data; namely, Discretized Multivariate Normal, Multivariate Log-Normal and Multivariate Cauchy. Covariance or scale matrices of data generated for the $k^{th}$ cluster in the $j^{th}$ intervention group and during the $l^{th}$ intervention period are set to the structure
\begin{equation}
	\label{equation:CovMatrix}
	\begin{bmatrix}   
		\sigma_{1}^2\textbf{\textrm{I}}_{m_{j1k}}+\rho_{1}\sigma_{1}^2(\textbf{\textrm{J}}_{m_{j1k}}-\textbf{\textrm{I}}_{m_{j1k}})&\rho_{12}\sigma_{1}\sigma_{2}\textbf{\textrm{J}}_{m_{j2k}}\\
		\rho_{12}\sigma_{1}\sigma_{2}\textbf{\textrm{J}}_{m_{j2k}}&\sigma_{2}^2\textbf{\textrm{I}}_{m_{j1k}}+\rho_{2}\sigma_{2}^2(\textbf{\textrm{J}}_{m_{j2k}}-\textbf{\textrm{I}}_{m_{j2k}})    
	\end{bmatrix},
\end{equation}
where $\sigma_{l}^2$ is the variance and $\rho_{l}$ and $\rho_{12}$ are inter-cluster and intra-cluster correlation coefficients. In particular, covariance matrices are set to be the same across all intervention groups. 
The impact of strong and weak intra-cluster and inter-cluster correlations will be investigated by varying the values of $\rho_1$, $\rho_2$ and $\rho_{12}$. 
Further, homoscedastic and heteroscedastic scenarios are covered by settings $\sigma_1^2=\sigma_2^2$ and $\sigma_1^2\ne\sigma_2^2$, respectively.  
We consider a pre-post design with only 3 intervention groups due to time complexity. The various simulation settings are summarized in Table \ref{Table:SimuSetting}. All the computations are done in the \texttt{R} environment (version 3.6.0) and the run size is 1000. 

\begin{table}[!htb]
	\centering
	\caption{Simulation Settings }
	\label{Table:SimuSetting}
	\begin{tabular}{|c|c|}
		\hline
		\multirow{3}{*}{Distributions} & \multirow{2}{*}{} \\
		&  Discretized Multivariate Normal, Multivariate Log-normal,\\ 
		&  Multivariate Cauchy\\ \hline
		Sample sizes $(n_c,n_1,n_2)$ & (5,10,5), (10,5,5) \\ \hline
		Maximum Cluster Size $M$ & 3,6 \\ \hline
		Correlation $(\rho_1,\rho_2,\rho_{12})$ & (0.9,0.9,0.1), (0.1,0.1,0.9), (0.1,0.9,0.9) \\ \hline
		Variance $(\sigma_1^2,\sigma_2^2)$ & (1,1), (1,5) \\ \hline
	\end{tabular}
\end{table}

\subsection{Size Simulation}
The achieved Type-I error rates for tests on intervention, time and their interaction effects are displayed in Table \ref{Table:TypeOne}. In summary, the achieved Type-I error rates are not quite influenced by sample sizes, correlation coefficients, variances or cluster sizes. Also, they do not significantly differ by distributions. Most of the achieved Type-I error rates are close to the nominal significance level $\alpha=0.05$. Overall, the proposed test  preserves the preassigned significance level well for intervention, time and the interaction effects well under various situations.

\begin{table}[htb]
	\centering
	\caption{ Achieved Type-I error rate ($\times100$) for the ANOVA-type tests on intervention (I), time (T) and interaction (I$\times$T) effects based on the $\chi^2$ approximation. Data are from discretized multivariate normal distribution, multivariate log-normal distribution, and multivariate Cauchy distribution. The nominal Type-I error rate is $\alpha=0.05$.}
	\begin{tabular}{cccc|c|c|c|c|c|c|c|c|c|}
		\cline{5-13}
		&                                                     &                                                &     & \multicolumn{3}{c|}{\begin{tabular}[c]{@{}c@{}}Discretized\\ Multivariate\\ Normal\\ Distribution\end{tabular}} & \multicolumn{3}{c|}{\begin{tabular}[c]{@{}c@{}}Multivariate\\ Log-Normal\\ Distribution\end{tabular}} & \multicolumn{3}{c|}{\begin{tabular}[c]{@{}c@{}}Multivariate\\ Cauchy\\ Distribution\end{tabular}} \\ \hline
		\multicolumn{1}{|c|}{$(n_c,n_1,n_2)$}            & \multicolumn{1}{c|}{$(\rho_1,\rho_2,\rho_{12})$}    & \multicolumn{1}{c|}{$(\sigma_1^2,\sigma_2^2)$} & $M$ & I  & T & I$\times$T  & I & T  & I$\times$T & I & T & I$\times$T \\ \hline
		\multicolumn{1}{|c|}{\multirow{12}{*}{(5,10,5)}} & \multicolumn{1}{c|}{\multirow{4}{*}{(0.9,0.9,0.1)}} & \multicolumn{1}{c|}{\multirow{2}{*}{(1,1)}}    & 3   & 5.6 & 5.0  & 5.4 & 5.3 & 5.0  & 6.4 & 5.6 & 5.9 & 4.5 \\ \cline{4-13} 
		\multicolumn{1}{|c|}{}                           & \multicolumn{1}{c|}{}                               & \multicolumn{1}{c|}{}                          & 6   & 6.2 & 5.1 & 5.0   & 6.3 & 5.5 & 4.9 & 6.2 & 4.2 & 5.6 \\ \cline{3-13} 
		\multicolumn{1}{|c|}{}                           & \multicolumn{1}{c|}{}                               & \multicolumn{1}{c|}{\multirow{2}{*}{(1,5)}}    & 3   & 5.5 & 5.9 & 6.5 & 5.0  & 5.7 & 6.5 & 5.7 & 6.9 & 4.7 \\ \cline{4-13} 
		\multicolumn{1}{|c|}{}                           & \multicolumn{1}{c|}{}                               & \multicolumn{1}{c|}{}                          & 6   & 6.1 & 5.6 & 5.2 & 6.6 & 5.9 & 4.8 & 6.7 & 4.8 & 5.3 \\ \cline{2-13} 
		\multicolumn{1}{|c|}{}                           & \multicolumn{1}{c|}{\multirow{4}{*}{(0.1,0.1,0.9)}} & \multicolumn{1}{c|}{\multirow{2}{*}{(1,1)}}    & 3   & 4.6 & 4.5 & 4.4 & 4.4 & 5.5 & 4.1 & 4.6 & 4.1 & 4.3 \\ \cline{4-13} 
		\multicolumn{1}{|c|}{}                           & \multicolumn{1}{c|}{}                               & \multicolumn{1}{c|}{}                          & 6   & 5.8 & 4.4 & 4.7 & 5.8 & 4.7 & 4.8 & 5.8 & 5.0   & 5.8 \\ \cline{3-13} 
		\multicolumn{1}{|c|}{}                           & \multicolumn{1}{c|}{}                               & \multicolumn{1}{c|}{\multirow{2}{*}{(1,5)}}    & 3   & 5.2 & 5.0  & 4.6 & 4.4 & 4.6 & 4.3 & 4.9 & 4.8 & 4.6 \\ \cline{4-13} 
		\multicolumn{1}{|c|}{}                           & \multicolumn{1}{c|}{}                               & \multicolumn{1}{c|}{}                          & 6   & 6.2 & 4.5 & 5.4 & 6.0  & 4.4 & 4.9 & 6.0   & 5.9 & 5.2 \\ \cline{2-13} 
		\multicolumn{1}{|c|}{}        &               \multicolumn{1}{c|}{\multirow{4}{*}{(0.1,0.9,0.9)}} & \multicolumn{1}{c|}{\multirow{2}{*}{(1,1)}}    & 3   & 4.4 & 3.9 & 5.0   & 5.0   & 4.5 & 5.7 & 5.7 & 6.0   & 5.2  \\ \cline{4-13} 
		\multicolumn{1}{|c|}{}                           & \multicolumn{1}{c|}{}                               & \multicolumn{1}{c|}{}                          & 6   & 6.2 & 4.2 & 4.7 & 6.0   & 4.7 & 4.8 & 4.5 & 5.9 & 5.0  \\ \cline{3-13} 
		\multicolumn{1}{|c|}{}                           & \multicolumn{1}{c|}{}                               & \multicolumn{1}{c|}{\multirow{2}{*}{(1,5)}}    & 3   & 6.6 & 4.4 & 6.9 & 6.6 & 4.2 & 7.0   & 5.1 & 5.9 & 5.0  \\ \cline{4-13} 
		\multicolumn{1}{|c|}{}                           & \multicolumn{1}{c|}{}                               & \multicolumn{1}{c|}{}                          & 6   &  6.2 & 5.2 & 5.3 & 6.4 & 5.0   & 5.5 & 5.4 & 5.3 & 4.8 \\ \hline
		\multicolumn{1}{|c|}{\multirow{12}{*}{(10,5,5)}} & \multicolumn{1}{c|}{\multirow{4}{*}{(0.9,0.9,0.1)}} & \multicolumn{1}{c|}{\multirow{2}{*}{(1,1)}}    & 3   & 5.6 & 4.9 & 5.3 & 5.1 & 5.9 & 5.9 & 5.6 & 5.7 & 4.9 \\ \cline{4-13} 
		\multicolumn{1}{|c|}{}                           & \multicolumn{1}{c|}{}                               & \multicolumn{1}{c|}{}                          & 6   &  5.0   & 6.0   & 5.9 & 5.2 & 6.2 & 5.4 & 4.1 & 5.7 & 4.9 \\ \cline{3-13} 
		\multicolumn{1}{|c|}{}                           & \multicolumn{1}{c|}{}                               & \multicolumn{1}{c|}{\multirow{2}{*}{(1,5)}}    & 3   & 4.7 & 5.9 & 5.2 & 4.9 & 5.7 & 5.3 & 5.7 & 6.0   & 5.3 \\ \cline{4-13} 
		\multicolumn{1}{|c|}{}                           & \multicolumn{1}{c|}{}                               & \multicolumn{1}{c|}{}                          & 6   &  5.3 & 5.9 & 5.4 & 5.8 & 5.9 & 5.3 & 4.8 & 4.9 & 4.4 \\ \cline{2-13} 
		\multicolumn{1}{|c|}{}                           & \multicolumn{1}{c|}{\multirow{4}{*}{(0.1,0.1,0.9)}} & \multicolumn{1}{c|}{\multirow{2}{*}{(1,1)}}    & 3   & 6.2 & 4.8 & 3.9 & 6.7 & 4.8 & 4.0   & 5.2 & 4.8 & 3.1 \\ \cline{4-13} 
		\multicolumn{1}{|c|}{}                           & \multicolumn{1}{c|}{}                               & \multicolumn{1}{c|}{}                          & 6   & 4.3 & 5.6 & 5.1 & 4.4 & 5.4 & 4.6 & 5.5 & 5.2 & 5.7 \\ \cline{3-13} 
		\multicolumn{1}{|c|}{}                           & \multicolumn{1}{c|}{}                               & \multicolumn{1}{c|}{\multirow{2}{*}{(1,5)}}    & 3   & 5.1 & 4.4 & 3.4 & 4.6 & 4.6 & 3.4 & 6.0   & 4.7 & 4.3 \\ \cline{4-13} 
		\multicolumn{1}{|c|}{}                           & \multicolumn{1}{c|}{}                               & \multicolumn{1}{c|}{}                          & 6   & 4.7 & 5.6 & 4.8 & 4.4 & 5.3 & 4.9 & 6.0  & 5.1 & 4.8 \\ \cline{2-13} 
		\multicolumn{1}{|c|}{}        &               \multicolumn{1}{c|}{\multirow{4}{*}{(0.1,0.9,0.9)}} & \multicolumn{1}{c|}{\multirow{2}{*}{(1,1)}}    & 3   &  6.1 & 3.9 & 4.9 & 5.9 & 3.4 & 4.0   & 6.4 & 3.5 & 4.2 \\ \cline{4-13} 
		\multicolumn{1}{|c|}{}                           & \multicolumn{1}{c|}{}                               & \multicolumn{1}{c|}{}                          & 6   &  6.1 & 4.5 & 3.9 & 5.6 & 4.3 & 3.8 & 6.4 & 6.0   & 4.9 \\ \cline{3-13} 
		\multicolumn{1}{|c|}{}                           & \multicolumn{1}{c|}{}                               & \multicolumn{1}{c|}{\multirow{2}{*}{(1,5)}}    & 3   &  5.7 & 3.9 & 3.5 & 5.7 & 4.1 & 3.7 & 5.6 & 4.2 & 3.7\\ \cline{4-13} 
		\multicolumn{1}{|c|}{}                           & \multicolumn{1}{c|}{}                               & \multicolumn{1}{c|}{}                          & 6   &  5.5 & 5.2 & 5.0   & 6.2 & 5.4 & 4.7 & 5.9 & 5.6 & 4.3\\ \hline
	\end{tabular}
	\label{Table:TypeOne}
\end{table}

\subsection{Power Simulation}
To investigate the powers of the test  in detecting certain alternatives, we consider a balanced pre-post design with 3 intervention groups, where $n_c=n_1=n_2=5$ for all intervention groups. Other simulation settings are the same as displayed in Table \ref{Table:SimuSetting} except the variances. Since the powers in the  homogeneous  and heterogeneity cases are not quite different, only results for homogeneous data ($\sigma_1^2=\sigma_2^2=1$) are displayed in this section. Data are generated from three multivariate distributions with mean vector $\bm{\mu}_{jl}$ and covariance matrix as shown in (\ref{equation:CovMatrix}).  We consider three types of shift alternatives:
\begin{enumerate}
	\item[(a)] one-point alternative: $\bm{\mu}_{11}=\bm{\mu}_{12}=\bm{\mu}_{21}=\bm{\mu}_{22}=\bm{\mu}_{31}=\bm{0}$ and $\bm{\mu}_{32}=\delta\mathbbm{1}$;
	\item[(b)] one-time alternative: $\bm{\mu}_{11}=\bm{\mu}_{21}=\bm{\mu}_{31}=\bm{0}$ and $\bm{\mu}_{12}=\bm{\mu}_{22}=\bm{\mu}_{32}=\delta\mathbbm{1}$;
	\item[(c)] increasing-trend alternative: $\bm{\mu}_{11}=\delta/6\mathbbm{1}$, $\bm{\mu}_{12}=\delta/3\mathbbm{1}$, $\bm{\mu}_{21}=\delta/2\mathbbm{1}$, 
	$\bm{\mu}_{22}=2\delta/3\mathbbm{1}$,
	$\bm{\mu}_{31}=5\delta/6\mathbbm{1}$ and $\bm{\mu}_{32}=\delta\mathbbm{1}$.
\end{enumerate}
The  shift alternatives  target differnt aspects of departure from the null. More precisely, the one-point alternative (a) represents the case where there is intervention effect in only one of the intervention groups so that the intervention and time effects exist accordingly. The one-time alternative (b) represents the situation where there is no intervention or interaction effect but time effects exist in all the intervention groups.  The increasing-trend alternative (c) represents the special case where there is no interaction effect but intervention and time effects exist. In all the alternatives (a)-(c), $\delta$ is varied from 0 to 3 in steps of 0.3. The simulation results are displayed in Tables \ref{Table:Power-OnePoint-Homo-Normal}--\ref{Table:Power-Cauchy}.

First of all, for data that are generated from discretized multivariate normal distribution, larger $M$ values tend to produce larger powers while other settung remaining the same. For the one-point alternative, powers of the tests for all the three effects increase persistently with $\delta$. However, the rates of increase  are affected by the correlation coefficients. For example, when $(\rho_{1},\rho_{2},\rho_{12})=(0.9,0.9,0.1)$, the powers for intervention and time effects increase at a similar rate but they do so  faster than the intervention effect. Whereas for $(\rho_{1},\rho_{2},\rho_{12})=(0.1,0.1,0.9)$ or $(0.1,0.9,0.9)$, the rate of  increase in powers for all three effects are similar, and the comparison of the powers reveals that  Interaction$>$Time$>$Intervention. For the one-time alternative, powers of the time effect increase at much faster rate compared to the one-point alternative, and the rate of increase are again related to the correlation coefficients. In particular, powers for $(\rho_{1},\rho_{2},\rho_{12})=(0.1,0.1,0.9)$ or $(0.1,0.9,0.9)$ increase faster than $(\rho_{1},\rho_{2},\rho_{12})=(0.9,0.9,0.1)$. For the increasing-trend alternative, overall powers for the intervention effect increase at faster rate than the time effect. Also, powers of the time effect increase faster for $(\rho_{1},\rho_{2},\rho_{12})=(0.1,0.1,0.9)$ or $(0.1,0.9,0.9)$ compared to $(\rho_{1},\rho_{2},\rho_{12})=(0.9,0.9,0.1)$. {In summary, large inter-cluster correlations together with small intra-cluster correlations produce the smallest powers while small inter-cluster correlations together with large intra-cluster correlations produce the largest powers for tests on time effects.} All the aforementioned patterns in the powers  carry on for  data  generated from the multivariate log-normal distribution. Powers for multivariate Cauchy data generally maintain the same patterns as previously discussed, while having smaller rate of increase compared to the other distributions. This is phenomenon are clearly noticeable for all three alternatives.

\begin{table}
	\centering
	\caption{Achieved power ($\times100$) for the ANOVA-type tests on intervention, time and interaction effects based on the $\chi^2$ approximation for the one-point alternative and balanced design ($n_c=n_1=n_2=5$). Data are generated from discretized multivariate normal distribution with homogeneous variances ($\sigma_1^2=\sigma_2^2=1$). Here, $\delta$ is the shift coefficient.}
	\begin{tabular}{|c|c|c|c|c|c|c|}
		\hline
		\multicolumn{7}{|c|}{$(\rho_{1},\rho_{2},\rho_{12})=(0.9,0.9,0.1)$}\\ \hline
		&\multicolumn{3}{c|}{$M=3$} & \multicolumn{3}{c|}{$M=6$} \\ \hline
		$\delta$ & Intervention & Time & Interaction & Intervention & Time & Interaction \\ \hline
		0 & 5.8  & 5.1  & 6.0    & 6.8  & 5.2  & 6.0 \\ \hline
		0.3 & 8.1  & 6.8  & 8.5  & 8.1  & 6.6  & 9.4 \\ \hline
		0.6 & 12.6 & 11.4 & 13.3 & 14.1 & 11.6 & 15.3 \\ \hline
		0.9 & 23.4 & 18.8 & 23.5 & 24.2 & 20.8 & 25.4 \\ \hline
		1.2 & 36.6 & 29.1 & 40.2 & 36.5 & 29.5 & 39.5 \\ \hline
		1.5 & 51.9 & 39.4 & 56.3 & 51.8 & 41.1 & 56.4 \\ \hline
		1.8 & 66.0   & 51.0   & 70.4 & 66.8 & 52.5 & 71.8 \\ \hline
		2.1 & 77.9 & 60.6 & 81.4 & 79.2 & 61.6 & 82.0 \\ \hline
		2.4 & 85.8 & 68.8 & 88.7 & 86.6 & 69.5 & 87.9 \\ \hline
		2.7 & 91.8 & 73.5 & 92.4 & 91.2 & 74.8 & 93.4 \\ \hline
		3.0 & 95.0   & 78.4 & 95.5 & 94.3 & 79.5 & 96.1 \\ \hline
		\multicolumn{7}{|c|}{$(\rho_{1},\rho_{2},\rho_{12})=(0.1,0.1,0.9)$} \\ \hline
		& \multicolumn{3}{c|}{$M=3$} & \multicolumn{3}{c|}{$M=6$} \\ \hline
		$\delta$ & Intervention & Time & Interaction & Intervention & Time & Interaction \\ \hline
		0 & 5.6  & 4.2  & 3.9  & 5.8  & 4.9  & 4.0 \\ \hline
		0.3 & 7.4  & 6.8  & 7.0    & 8.3  & 9.6  & 7.9 \\ \hline
		0.6 & 12.1 & 13.5 & 18.2 & 15.2 & 19.8 & 19.6 \\ \hline
		0.9 & 21.9 & 27.0   & 38.3 & 26.3 & 33.7 & 40.7 \\ \hline
		1.2 & 34.5 & 44.7 & 60.5 & 42.7 & 49.3 & 67.2 \\ \hline
		1.5 & 48.2 & 61.9 & 79.7 & 59.9 & 67.9 & 87.9 \\ \hline
		1.8 & 62.2 & 76.3 & 91.4 & 75.5 & 80.0   & 95.7 \\ \hline
		2.1 & 75.3 & 85.5 & 96.8 & 85.3 & 88.7 & 98.5 \\ \hline
		2.4 & 84.0   & 91.5 & 99.1 & 91.6 & 93.8 & 99.5 \\ \hline
		2.7 & 90.0   & 94.0   & 99.6 & 95.6 & 96.5 & 99.9 \\ \hline
		3.0 & 93.3 & 95.6 & 99.8 & 98.2 & 97.6 & 100 \\ \hline
		\multicolumn{7}{|c|}{$(\rho_{1},\rho_{2},\rho_{12})=(0.1,0.9,0.9)$} \\ \hline
		& \multicolumn{3}{c|}{$M=3$} & \multicolumn{3}{c|}{$M=6$} \\ \hline
		$\delta$ & Intervention & Time & Interaction & Intervention & Time & Interaction \\ \hline
		0 & 5.9  & 3.8  & 3.8  & 5.7  & 5.9  & 4.1 \\ \hline
		0.3 & 7.0    & 6.6  & 5.3  & 8.1  & 8.3  & 7.2 \\ \hline
		0.6 & 12.1 & 13.1 & 16.3 & 14.1 & 17.5 & 18.2 \\ \hline
		0.9 & 19.7 & 24.7 & 36.2 & 23.0   & 29.0   & 36.6 \\ \hline
		1.2 & 32.7 & 41.3 & 59.5 & 36.4 & 44.8 & 61.4 \\ \hline
		1.5 & 45.8 & 59.0   & 76.1 & 50.0   & 61.5 & 81.2 \\ \hline
		1.8 & 58.8 & 72.2 & 89.3 & 64.9 & 73.8 & 92.8 \\ \hline
		2.1 & 70.6 & 82.1 & 95.1 & 77.5 & 82.0   & 97.7 \\ \hline
		2.4 & 81.0   & 88.3 & 97.7 & 85.5 & 87.4 & 99.2 \\ \hline
		2.7 & 87.6 & 91.3 & 99.0   & 91.6 & 91.2 & 99.8 \\ \hline
		3.0 & 91.9 & 93.6 & 99.7 & 95.3 & 93.5 & 99.9 \\ \hline
	\end{tabular}
	\label{Table:Power-OnePoint-Homo-Normal}
\end{table}

\begin{table}
	\centering
	\caption{Achieved power ($\times100$) for the ANOVA-type tests on intervention, time and interaction effects based on the $\chi^2$ approximation for the one-time alternative and balanced design ($n_c=n_1=n_2=5$). Data are generated from discretized multivariate normal distribution with homogeneous variances ($\sigma_1^2=\sigma_2^2=1$). Here, $\delta$ is the shift coefficient.}
	\begin{tabular}{|c|c|c|c|c|c|c|}
		\hline
		\multicolumn{7}{|c|}{$(\rho_{1},\rho_{2},\rho_{12})=(0.9,0.9,0.1)$}\\ \hline
		&\multicolumn{3}{c|}{$M=3$} & \multicolumn{3}{c|}{$M=6$} \\ \hline
		$\delta$ & Intervention & Time & Interaction & Intervention & Time & Interaction \\ \hline
		0 & 5.5 & 5 & 5.6 & 5.2 & 4.9     & 6.5 \\ \hline
		0.3 & 5.5 & 19.5     & 6.2 & 6.1 & 20.3    & 6.7 \\ \hline
		0.6 & 5.2 & 57.8     & 5.6 & 5.6 & 58.7    & 6.8 \\ \hline
		0.9 & 4.9 & 90.6     & 5.9 & 5.2 & 88.5    & 6.8 \\ \hline
		1.2 & 5.0   & 99.0       & 6.1 & 5.7 & 99.3    & 6.5 \\ \hline
		1.5 & 4.6 & 99.8     & 5.3 & 4.7 & 99.9    & 6.6 \\ \hline
		1.8 & 4.4 & 100      & 5.2 & 5.0   & 100     & 5.7 \\ \hline
		2.1 & 5.2 & 100      & 5.5 & 5.5 & 100     & 6.0 \\ \hline
		2.4 & 4.8 & 100      & 6.0   & 5.0   & 100     & 6.2 \\ \hline
		2.7 & 4.4 & 99.8     & 5.1 & 5.3 & 100     & 5.9 \\ \hline
		3.0 & 4.9 & 99.3 & 5.3 & 5.0   & 99.8 & 5.9 \\ \hline
		\multicolumn{7}{|c|}{$(\rho_{1},\rho_{2},\rho_{12})=(0.1,0.1,0.9)$} \\ \hline
		& \multicolumn{3}{c|}{$M=3$} & \multicolumn{3}{c|}{$M=6$} \\ \hline
		$\delta$ & Intervention & Time & Interaction & Intervention & Time & Interaction \\ \hline
		0 & 5.2 & 3.3 & 4.0   & 5.9 & 5.0    & 4.1 \\ \hline
		0.3 & 5.6 & 28.2     & 3.5 & 5.3 & 33.4 & 3.7 \\ \hline
		0.6 & 5.4 & 82.5     & 3.2 &  6.2 & 84.6 & 3.8 \\ \hline
		0.9 & 4.9 & 99.0       & 4.2 & 5.8 & 99.5 & 3.7 \\ \hline
		1.2 & 4.9 & 100      & 3.6 & 5.6 & 100  & 3.9 \\ \hline
		1.5 & 5.6 & 100 & 3.7 & 5.3 & 100  & 3.6 \\ \hline
		1.8 & 4.0 & 100 & 3.2 & 5.8 & 100  & 4.0\\ \hline
		2.1 & 5.2 & 100 & 3.2 & 5.5 & 100  & 4.8 \\ \hline
		2.4 & 5.0 & 100 & 3.8 & 5.0   & 100  & 3.9 \\ \hline
		2.7 & 4.6 & 100 & 3.1 & 5.9 & 100  & 3.5 \\ \hline
		3.0 &  3.7 & 99.7 & 4.0 & 5.6 & 100  & 3.9 \\ \hline
		\multicolumn{7}{|c|}{$(\rho_{1},\rho_{2},\rho_{12})=(0.1,0.9,0.9)$} \\ \hline
		& \multicolumn{3}{c|}{$M=3$} & \multicolumn{3}{c|}{$M=6$} \\ \hline
		$\delta$ & Intervention & Time & Interaction & Intervention & Time & Interaction \\ \hline
		0 & 4.8 & 4.6  & 3.5 & 6.7 & 4.8  & 4.5 \\ \hline
		0.3 & 5.2 & 27.4 & 3.6 & 6.4 & 28.8 & 4.7 \\ \hline
		0.6 & 5.3 & 78.9 & 3.6 & 6.8 & 79.1 & 4.2 \\ \hline
		0.9 & 4.7 & 98.6 & 3.8 & 6.2 & 97.9 & 4.9 \\ \hline
		1.2 & 4.6 & 99.9 & 4.0   & 6.1 & 100  & 4.9 \\ \hline
		1.5 & 4.9 & 100  & 3.8 & 6.7 & 100 & 4.8\\ \hline
		1.8 & 5.0   & 100  & 3.7 & 6.4 & 100 & 4.1\\ \hline
		2.1 & 4.6 & 100  & 3.8 & 6.4 & 100 & 4.4\\ \hline
		2.4 & 5.0   & 100  & 3.9 & 6.9 & 100 & 3.6 \\ \hline
		2.7 & 5.1 & 100  & 3.9 & 6.6 & 100 & 4.2\\ \hline
		3.0 & 4.5 & 99.7 & 4.0 & 6.6 & 100 & 4.6 \\ \hline
	\end{tabular}
	\label{Table:Power-OneTime-Homo-Normal}
\end{table}

\begin{table}
	\centering
	\caption{Achieved power ($\times100$) for the ANOVA-type tests on intervention, time and interaction effects based on the $\chi^2$ approximation for the increasing-trend alternative and balanced design ($n_c=n_1=n_2=5$). Data are generated from discretized multivariate normal distribution with homogeneous variances ($\sigma_1^2=\sigma_2^2=1$). Here, $\delta$ is the shift coefficient.}
	\begin{tabular}{|c|c|c|c|c|c|c|}
		\hline
		\multicolumn{7}{|c|}{$(\rho_{1},\rho_{2},\rho_{12})=(0.9,0.9,0.1)$}\\ \hline
		&\multicolumn{3}{c|}{$M=3$} & \multicolumn{3}{c|}{$M=6$} \\ \hline
		$\delta$ & Intervention & Time & Interaction & Intervention & Time & Interaction \\ \hline
		0 & 6.6  & 4.7  & 6.3 & 5.4  & 5.3  & 5.8 \\ \hline
		0.3 & 9.5  & 5.2  & 5.6 & 10.0   & 5.7  & 6.8 \\ \hline
		0.6 & 17.7 & 5.5  & 6.4 & 17.0   & 6.6  & 5.7 \\ \hline
		0.9 & 31.2 & 8.0    & 6.2 & 29.9 & 8.9  & 5.3 \\ \hline
		1.2 & 49.5 & 9.6  & 5.8 & 50.9 & 12.6 & 5.2 \\ \hline
		1.5 & 69.6 & 13.5 & 6.1 & 72.5 & 15.9 & 6.4 \\ \hline
		1.8 & 85.2 & 18.0   & 5.4 & 87.7 & 19.8 & 5.3 \\ \hline
		2.1 & 94.2 & 23.5 & 6.4 & 95.1 & 25.0   & 4.9 \\ \hline
		2.4 & 98.3 & 29.2 & 4.8 & 98.5 & 30.3 & 5.0 \\ \hline
		2.7 & 99.9 & 35.7 & 6.3 & 99.5 & 36.3 & 5.9 \\ \hline
		3.0 & 100  & 43.3 & 5.9 & 100  & 43.1 & 5.3 \\ \hline
		\multicolumn{7}{|c|}{$(\rho_{1},\rho_{2},\rho_{12})=(0.1,0.1,0.9)$} \\ \hline
		& \multicolumn{3}{c|}{$M=3$} & \multicolumn{3}{c|}{$M=6$} \\ \hline
		$\delta$ & Intervention & Time & Interaction & Intervention & Time & Interaction \\ \hline
		0 & 5.8  & 4.3  & 4.2 & 5.8  & 3.7  & 4.7 \\ \hline
		0.3 & 8.5  & 5.1  & 4.0   & 9.1  & 4.2  & 5.6 \\ \hline
		0.6 & 16.1 & 7.2  & 4.5 & 19.6 & 6.0    & 5.3 \\ \hline
		0.9 & 28.9 & 10.0   & 4.0   & 36.9 & 8.8  & 5.1 \\ \hline
		1.2 & 46.2 & 15.0   & 4.4 & 58.5 & 16.5 & 5.4 \\ \hline
		1.5 & 65.9 & 20.9 & 4.0   & 80.1 & 23.6 & 4.4 \\ \hline
		1.8 & 82.8 & 27.6 & 5.4 & 91.2 & 30.5 & 4.6 \\ \hline
		2.1 & 92.1 & 36.5 & 4.9 & 96.3 & 38.7 & 4.6 \\ \hline
		2.4 & 97.1 & 44.4 & 4.3 & 99.0   & 50.0   & 5.7 \\ \hline
		2.7 & 99.0   & 54.1 & 5.0   & 99.8 & 59.8 & 5.0 \\ \hline
		3.0 & 99.9 & 63.7 & 4.5 & 100  & 70.0   & 4.6 \\ \hline
		\multicolumn{7}{|c|}{$(\rho_{1},\rho_{2},\rho_{12})=(0.1,0.9,0.9)$} \\ \hline
		& \multicolumn{3}{c|}{$M=3$} & \multicolumn{3}{c|}{$M=6$} \\ \hline
		$\delta$ & Intervention & Time & Interaction & Intervention & Time & Interaction \\ \hline
		0 & 6.0    & 3.5  & 3.1 & 7.2  & 4.5  & 4.1 \\ \hline
		0.3 & 8.2  & 5.3  & 3.8 & 8.3  & 4.8  & 5.2 \\ \hline
		0.6 & 15.5 & 6.6  & 3.6 & 16.3 & 6.8  & 5.1 \\ \hline
		0.9 & 28.2 & 9.4  & 3.9 & 27.9 & 9.6  & 3.8 \\ \hline
		1.2 & 43.4 & 13.7 & 3.5 & 47.9 & 14.9 & 4.9 \\ \hline
		1.5 & 62.9 & 19.9 & 4.4 & 65.8 & 20.2 & 4.8 \\ \hline
		1.8 & 78.8 & 24.9 & 4.5 & 83.4 & 28.0   & 5.0 \\ \hline
		2.1 & 89.5 & 34.3 & 4.3 & 92.0 & 36.4 & 4.5 \\ \hline
		2.4 & 95.2 & 40.4 & 4.1 & 97.5 & 43.5 & 4.4 \\ \hline
		2.7 & 98.1 & 52.1 & 3.7 & 99.2 & 53.0 & 4.3 \\ \hline
		3.0 & 99.7 & 60.0   & 4.5 & 100 & 62.0 & 5.1\\ \hline
	\end{tabular}
	\label{Table:Power-IncreasingTrend-Homo-Normal}
\end{table}

\begin{table}
	\centering
	\caption{Achieved power ($\times100$) for the ANOVA-type tests on intervention, time and interaction effects based on the $\chi^2$ approximation for all three alternatives and balanced design ($n_c=n_1=n_2=5$). Data are generated from  multivariate log-normal distribution with homogeneous variances ($\sigma_1^2=\sigma_2^2=1$) and correlation coefficients $(\rho_{1},\rho_{2},\rho_{12})=(0.9,0.9,0.1)$. Here, $\delta$ is the shift coefficient.}
	\begin{tabular}{|c|c|c|c|c|c|c|}
		\hline
		\multicolumn{7}{|c|}{One-Point Alternative}\\ \hline
		&\multicolumn{3}{c|}{$M=3$} & \multicolumn{3}{c|}{$M=6$} \\ \hline
		$\delta$ & Intervention & Time & Interaction & Intervention & Time & Interaction \\ \hline
		0   & 5.5  & 5.1  & 4.2 & 5.9 & 4.9 & 4.8 \\ \hline
		0.3 & 6.4 & 6.0 & 6.0 & 7.4 & 6.1 & 7.7 \\ \hline
		0.6 & 12.3 & 12.0   & 12.7 & 14.1 & 10.8 & 14.1 \\ \hline
		0.9 & 21.0   & 20.2 & 22.8 & 24.3 & 18.4 & 26.3 \\ \hline
		1.2 & 34.4 & 30.9 & 38.8 & 37.7 & 29.4 & 41.3 \\ \hline
		1.5 & 50.8 & 43.6 & 57.2 & 53.1 & 38.8 & 58.0 \\ \hline
		1.8 & 64.9 & 52.1 & 70.6 & 65.0   & 49.8 & 70.6 \\ \hline
		2.1 & 78.2 & 59.6 & 81.5 & 77.0 & 58.4 & 81.5\\ \hline
		2.4 & 85.6 & 67.2 & 88.0 & 83.5 & 64.9 & 89.3 \\ \hline
		2.7 & 90.3 & 71.8 & 93.1 & 88.3 & 69.8 & 93.5 \\ \hline
		3.0   & 93.9 & 75.7 & 95.4 & 91.8 & 72.5 & 95.7\\ \hline
		\multicolumn{7}{|c|}{One-Time Alternative} \\ \hline
		& \multicolumn{3}{c|}{$M=3$} & \multicolumn{3}{c|}{$M=6$} \\ \hline
		$\delta$ & Intervention & Time & Interaction & Intervention & Time & Interaction \\ \hline
		0   & 6.2  & 4.9  & 4.0 &  5.1  & 4.8  & 5.4 \\ \hline
		0.3 & 6.0  & 19.5 & 3.9 & 5.2  & 22.4 & 5.2 \\ \hline
		0.6 & 5.5  & 58.1 & 4.7 & 5.6  & 60.8 & 5.2 \\ \hline
		0.9 & 5.7  & 91.0   & 4.3 & 5.5  & 90.9 & 4.8 \\ \hline
		1.2 & 5.5  & 99.1 & 4.1 & 5.2  & 99.2 & 4.8 \\ \hline
		1.5 & 5.5  & 99.9 & 4.0 & 4.9  & 100  & 4.6 \\ \hline
		1.8 & 5.1  & 100  & 3.8 & 4.6  & 100  & 4.8 \\ \hline
		2.1 & 5.2  & 100  & 3.9 & 4.8  & 100  & 5.0  \\ \hline
		2.4 & 5.2  & 100  & 4.2 & 5.0 & 100  & 5.1 \\ \hline
		2.7 & 5.3  & 99.7 & 4.0 & 5.1  & 100  & 5.0 \\ \hline
		3.0  & 5.3  & 98.8 & 3.8 & 5.2  & 99.1 & 5.1 \\ \hline
		\multicolumn{7}{|c|}{Increasing-Trend Alternative} \\ \hline
		& \multicolumn{3}{c|}{$M=3$} & \multicolumn{3}{c|}{$M=6$} \\ \hline
		$\delta$ & Intervention & Time & Interaction & Intervention & Time & Interaction \\ \hline
		0   & 6.2  & 4.9  & 4.0 & 5.1  & 4.8  & 5.4  \\ \hline
		0.3 & 7.0    & 5.3  & 4.0  & 6.7  & 5.8  & 5.6 \\ \hline
		0.6 & 15.4 & 5.9  & 4.0 & 17.2 & 7.1  & 5.4 \\ \hline
		0.9 & 30.0   & 7.9  & 3.8 & 31.0   & 8.6  & 5.2 \\ \hline
		1.2 & 49.4 & 10.1 & 4.0 & 51.0   & 13.2 & 5.3 \\ \hline
		1.5 & 68.9 & 14.0   & 4.0 & 69.7 & 17.2 & 5.2 \\ \hline
		1.8 & 85.0   & 19.0   & 3.9 & 85.0  & 21.4 & 5.0 \\ \hline
		2.1 & 94.3 & 25.1 & 3.6 & 94.8 & 26.3 & 5.1 \\ \hline
		2.4 & 98.3 & 30.0   & 4.0  & 97.8 & 32.8 & 5.1  \\ \hline
		2.7 & 99.5 & 35.6 & 3.9 & 99.6 & 39.1 & 5.5 \\ \hline
		3.0  & 99.8 & 41.0   & 4.3 & 99.9 & 46.4 & 5.3\\\hline
	\end{tabular}
	\label{Table:Power-Exponential}
\end{table}

\begin{table}
	\centering
	\caption{Achieved power ($\times100$) for the ANOVA-type tests on intervention, time and interaction effects based on the $\chi^2$ approximation for all three alternatives and balanced design ($n_c=n_1=n_2=5$). Data are generated from  multivariate Cauchy distribution with homogeneous variances ($\sigma_1^2=\sigma_2^2=1$) and correlation coefficients $(\rho_{1},\rho_{2},\rho_{12})=(0.9,0.9,0.1)$. Here, $\delta$ is the shift coefficient.}
	\begin{tabular}{|c|c|c|c|c|c|c|}
		\hline
		\multicolumn{7}{|c|}{One-Point Alternative}\\ \hline
		&\multicolumn{3}{c|}{$M=3$} & \multicolumn{3}{c|}{$M=6$} \\ \hline
		$\delta$ & Intervention & Time & Interaction & Intervention & Time & Interaction \\ \hline
		0 &  5.6 & 5.0 & 5.8 & 5.0 & 6.6 & 5.8\\ \hline
		0.3 & 6.2 & 5.7 & 6.5 & 6.0 & 6.6 & 6.2 \\ \hline
		0.6 &  7.6 & 7.2 & 7.9 & 8.0 & 7.7 & 7.6 \\ \hline
		0.9 &  9.4 & 9.9 & 11.4 & 10.9 & 10.7 & 10.6 \\ \hline
		1.2 &  13.6 & 13.5 & 15.1 &14.0 & 13.1 & 14.9 \\ \hline
		1.5 &  18.2 & 16.3 & 19.7 & 18.1 & 16.6 & 19.9 \\ \hline
		1.8 & 22.5 & 19.6 & 25.3 & 22.4 & 18.5 & 25.5 \\ \hline
		2.1 & 28.8 & 22.8 & 30.1 & 27.7 & 21.4 & 30.3 \\ \hline
		2.4 &33.6 & 26.4 & 34.2 & 33.1 & 24.4 & 36.1 \\ \hline
		2.7 & 37.4 & 30.3 & 38.0 & 37.8 & 27.3 & 39.7 \\ \hline
		3.0 & 41.6 & 33.4 & 42.0 & 42.4 & 30.6 & 44.7\\ \hline
		\multicolumn{7}{|c|}{One-Time Alternative} \\ \hline
		& \multicolumn{3}{c|}{$M=3$} & \multicolumn{3}{c|}{$M=6$} \\ \hline
		$\delta$ & Intervention & Time & Interaction & Intervention & Time & Interaction \\ \hline
		0 & 6.5 & 5.0 & 5.2 & 5.2 & 6.0 & 5.1 \\ \hline
		0.3 & 6.1 & 10.8 & 5.1 & 5.3 & 10.0 & 5.1 \\ \hline
		0.6  & 6.1 & 22.2 & 5.1 & 4.8 & 22.4 & 5.4\\ \hline
		0.9  & 6.1 & 43.3 & 5.1 & 5.2 & 39.8 & 5.8\\ \hline
		1.2  & 5.9 & 61.1 & 5.2 & 5.5 & 59.7 & 5.3\\ \hline
		1.5  & 5.7 & 74.8 & 4.6 & 5.4 & 75.2 & 5.3 \\ \hline
		1.8  & 5.6 & 84.3 & 5.0 & 5.3 & 83.6 & 5.4 \\ \hline
		2.1  & 5.4 & 90.5 & 4.8 & 5.1 & 89.3 & 5.3 \\ \hline
		2.4 & 5.6 & 94.0 & 4.6 & 4.5 & 93.2 & 4.9 \\ \hline
		2.7 & 5.3 & 96.1 & 4.4 & 4.4 & 95.3 & 4.9 \\ \hline
		3.0 & 5.2 & 97.4 & 4.3 & 4.6 & 97.1 & 4.7 \\ \hline
		\multicolumn{7}{|c|}{Increasing-Trend Alternative} \\ \hline
		& \multicolumn{3}{c|}{$M=3$} & \multicolumn{3}{c|}{$M=6$} \\ \hline
		$\delta$ & Intervention & Time & Interaction & Intervention & Time & Interaction \\ \hline
		0 & 6.5 & 5.0  & 5.2 & 5.2  & 6.0   & 5.1 \\ \hline
		0.3 & 7.1 & 5.6  & 4.7 & 5.7  & 6.5 & 5.0 \\ \hline
		0.6 & 9.6 & 6.1  & 5.0   & 9.4  & 6.1 & 5.0 \\ \hline
		0.9 & 12.8 & 6.8  & 4.4 & 14.5 & 6.5 & 5.0 \\ \hline
		1.2 & 18.9 & 7.4  & 4.5 &  20.3 & 7.5 & 5.3 \\ \hline
		1.5 & 26.0 & 8.6  & 4.4 &  27.6 & 8.8 & 5.0 \\ \hline
		1.8 & 34.9 & 9.8  & 4.7 &  37.1 & 9.8 & 5.6 \\ \hline
		2.1 & 42.8 & 11.6 & 4.5 & 45.0 & 10.9 & 5.3 \\ \hline
		2.4 & 52.3 & 13.4 & 4.2 & 52.7 & 11.9 & 5.2 \\ \hline
		2.7 & 60.1 & 14.1 & 4.3 & 60.6 & 15.0 & 4.2 \\ \hline
		3.0 & 66.7 &15.4 & 4.7 &  66.2 & 16.8 & 4.1 \\ \hline
	\end{tabular}
	\label{Table:Power-Cauchy}
\end{table}

\section{Analysis of the ARTIS Data}
\label{sec:AnalysisAndComparison}

\subsection{Nonparametric Analysis Based on Relative Effects}
\label{Sec:NonparmAnalysis}
Before applying the nonparametric test procedures, we pre-processed the ARTIS data to fit the data structure displayed in Table \ref{Table:Structure}. 
{As mentioned in Section \ref{sec:MotivatingExample}, there are three hierarchies of clustering. To  fit the data structure as described in Table \ref{Table:Structure} and illustrate the concepts in the proposed methods clearly, we keep only the second and third layers of  clustering  by randomly selecting one of the children for houses that have multiple participating children.}  Each selected child is regarded as a cluster while the measurements from multiple visits constitute the dependent replicates (observations on subunits).  {One may also keep the first layer of the cluster correlation by keeping data from only one of the visits for all participating children in the house so that the houses serve as  clusters.} There are 6 households in the placebo intervention group that have multiple participating children. The wood-stove and air-filter intervention groups each has 4 households with multiple participating children.  Since there are only a small number of households with multiple children, the effect of randomly selecting a child is minimal. The objective of the analysis is to evaluate whether intervention assignment modified the pre-intervention to post-intervention change in PAQLQ scores.

The three domains have the same sample size allocations: $\bm{n_c}=(37,13,35)$, $\bm{n_1}=(3,2,7)$ and $\bm{n_2}=(0,1,0)$ on each domain variable, where $\bm{n_c}=(n_1^{(c)},n_2^{(c)},n_3^{(c)})$, $\bm{n_1}=(n_{11},n_{21},n_{31})$ and $\bm{n_2}=(n_{12},n_{22},n_{32})$. In terms of the total sample sizes, there are 40, 16 and 42 participants in placebo, wood-stove and air-filter groups respectively. Boxplots for the three domain variables for each intervention group and at each intervention period are shown in Figure \ref{Figure:Boxplots}. {Estimates and test reults based on the nonparametric effect size measures are displayed in Table \ref{Table:TestStat}.} At $\alpha=0.05$, we conclude that only the time effect is significant for each of the three domain variables, {i.e. the effects size estimates are significantly different before and after the interventions, but this difference does not vary significantly by interventions. In order to closely investigate the pre-post nonparametric effect size changes across the intervention groups, additional tests were conducted and the results are displayed in Table \ref{Table:NonparametricContrast}. 
The nonparametric effect sizes on the activity limitation domain significantly increased after upgrading the wood stoves. On the emotional function domain, the nonparametric effect sizes significantly increased both after installing the sham air-filter (placebo) and upgrading the wood stoves. For the the symptoms domain, there is significant increase in the nonparametric effect size measure after installing the sham air-filter. Looking at the boxplots in Figure \ref{Figure:Boxplots}, we see that the "Post" boxplots are generally positioned higher than the corresponding "Pre" boxplots, though the tendency of "increase" is not always significant for all the interventions. 
}


However, since none of the tests on intervention effect is significant, we conclude that the PAQLQ scores in any intervention group do not tend to be greater or smaller than those in the other two intervention groups. This conclusion is valid for all three domain variables. There is no significant interaction effect, which means the intervention assignment does not modify the pre-intervention to post-intervention change in PAQLQ scores. To better visualize these conclusions, the boxplots of PAQLQ scores for each intervention group is displayed in Figure \ref{Figure:BoxplotsNoTime}. In this figure, the three boxes in each plot have quite similar ranges and quartiles and , thus, they do not generally appear to be different from each other. 

\begin{figure}[!htb]
	\caption{Boplots of QoL Scores by Intervention Group and Intervention Period}
	\includegraphics[scale=0.3]{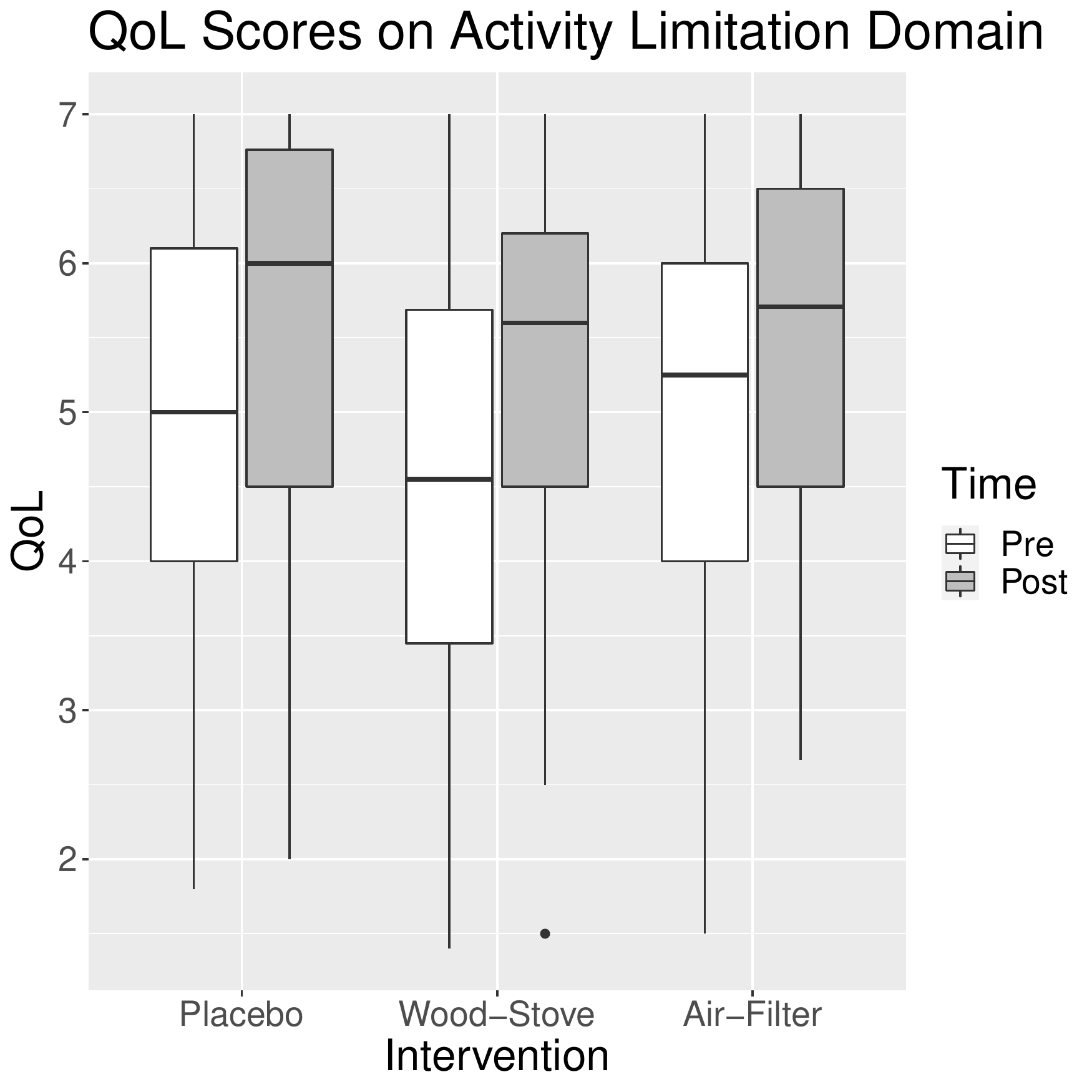}
	\includegraphics[scale=0.3]{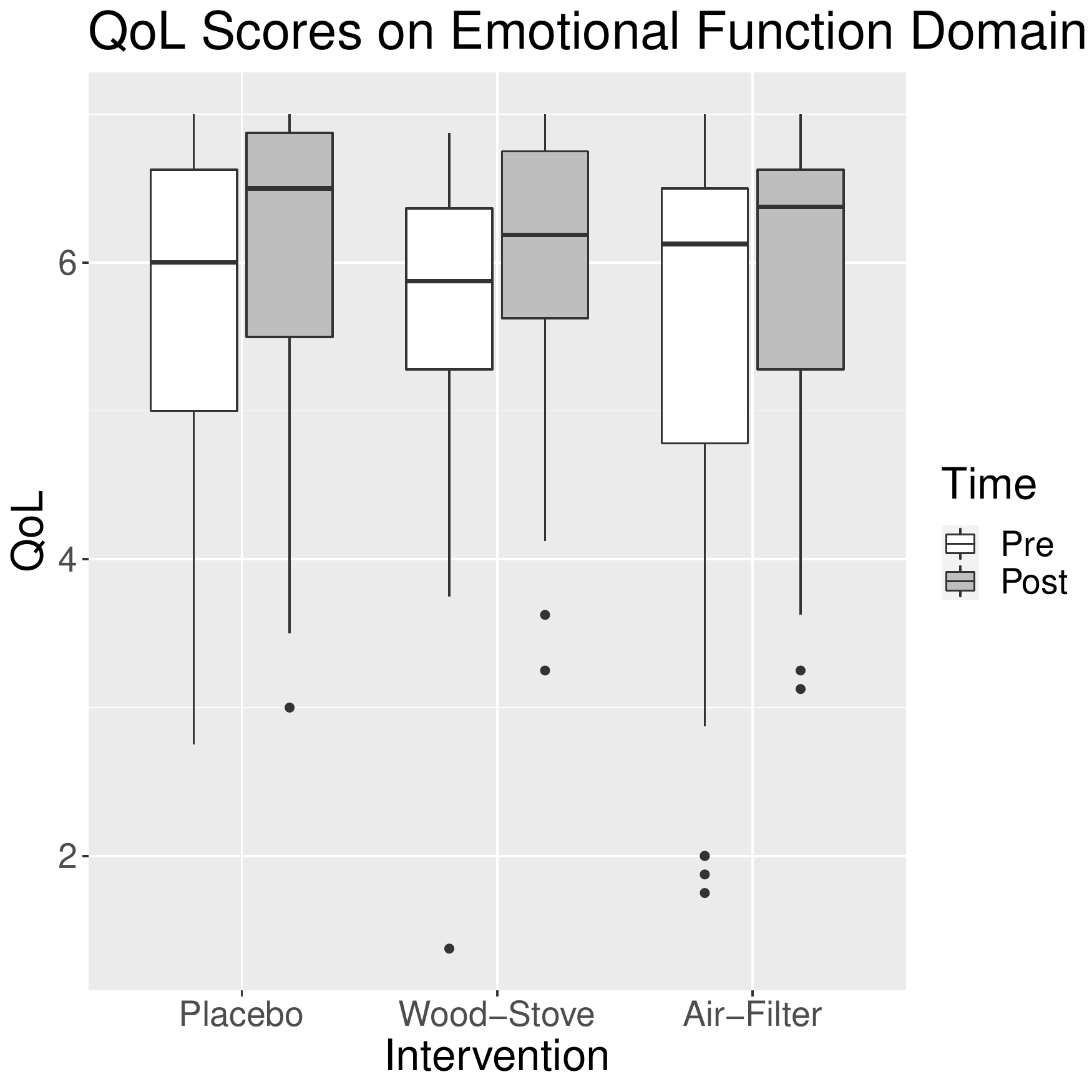}
	\includegraphics[scale=0.3]{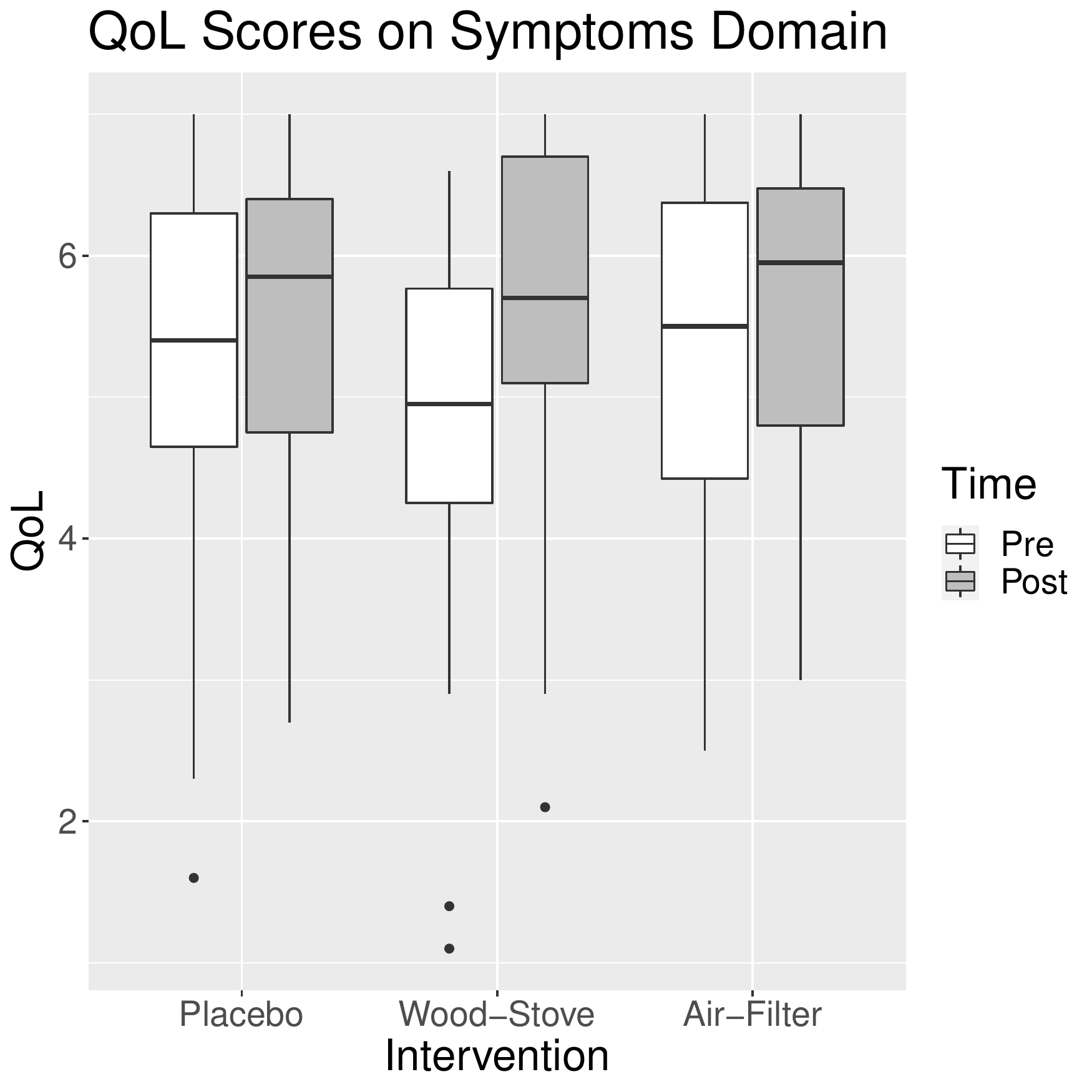}
	\label{Figure:Boxplots}
\end{figure}


\begin{table}[!htb]
	\caption{Nonparametric Effect Size Analysis for Domain Variables in ARTIS Data}
	\centering
	{
\begin{tabular}{|c|c|c|c|c|}
	\hline
	\multirow{2}{*}{\begin{tabular}[c]{@{}c@{}}Activity\\ Limitation\end{tabular}} & Effects                                                                  & Intervention        & Time            & Interaction        \\ \cline{2-5} 
	& $p$-value                                                               & 0.661            & 0.002                   & 0.237              \\ \hline
	\multirow{2}{*}{\begin{tabular}[c]{@{}c@{}}Emotional\\ Function\end{tabular}}  & Test                                                                  & Intervention        & Time            & Interaction        \\ \cline{2-5} 
	& $p$-value                                                               & 0.484            & \textless{}0.001        & 0.632              \\ \hline
	\multirow{3}{*}{Symptoms}                                                      & Effects                                                                & Intervention        & Time            & Interaction        \\ \cline{2-5} 
	& $p$-value                                                               & 0.738            & 0.005                   & 0.851              \\ \hline
\end{tabular}	
}
\label{Table:TestStat}
\end{table}

\begin{table}[!htb]
	\centering
	\tiny
	\caption{Tests on Pre-Post Nonparametric Effect Sizes by Intervention Group}
	{
		\begin{tabular}{|c|c|c|c|c|c|c|c|c|c|c|c|c|}
			\hline
			\begin{tabular}[c]{@{}c@{}}Nonparametric\\ Effect Size\end{tabular} & \multicolumn{4}{c|}{Activity Limitation}                                                                                                                                               & \multicolumn{4}{c|}{Emotional Function}                                                                                                                                                         & \multicolumn{4}{c|}{Symptoms}                                                                                                                                                          \\ \hline
			\begin{tabular}[c]{@{}c@{}}Intervention\\ Group\end{tabular}        & Pre                                                     & Post                                                    & Diff                                                     & p-value & Pre                                                     & Post                                                    & Diff                                                     & p-value          & Pre                                                     & Post                                                    & Diff                                                     & p-value \\ \hline
			Placebo                                                             & \begin{tabular}[c]{@{}c@{}}0.484\\ (0.589)\end{tabular} & \begin{tabular}[c]{@{}c@{}}0.547\\ (0.614)\end{tabular} & \begin{tabular}[c]{@{}c@{}}-0.063\\ (0.801)\end{tabular} & 0.132   & \begin{tabular}[c]{@{}c@{}}0.472\\ (0.509)\end{tabular} & \begin{tabular}[c]{@{}c@{}}0.588\\ (0.647)\end{tabular} & \begin{tabular}[c]{@{}c@{}}-0.116\\ (0.674)\end{tabular} & 0.001 & \begin{tabular}[c]{@{}c@{}}0.472\\ (0.570)\end{tabular} & \begin{tabular}[c]{@{}c@{}}0.577\\ (0.666)\end{tabular} & \begin{tabular}[c]{@{}c@{}}-0.105\\ (0.817)\end{tabular} & 0.015   \\ \hline
			Wood-stove                                                          & \begin{tabular}[c]{@{}c@{}}0.374\\ (0.846)\end{tabular} & \begin{tabular}[c]{@{}c@{}}0.560\\ (1.763)\end{tabular} & \begin{tabular}[c]{@{}c@{}}-0.186\\ (1.492)\end{tabular} & 0.018  & \begin{tabular}[c]{@{}c@{}}0.385\\ (1.035)\end{tabular} & \begin{tabular}[c]{@{}c@{}}0.529\\ (1.095)\end{tabular} & \begin{tabular}[c]{@{}c@{}}-0.144\\ (1.255)\end{tabular} & 0.030            & \begin{tabular}[c]{@{}c@{}}0.434\\ (0.714)\end{tabular} & \begin{tabular}[c]{@{}c@{}}0.518\\ (1.377)\end{tabular} & \begin{tabular}[c]{@{}c@{}}-0.084\\ (1.282)\end{tabular} & 0.209   \\ \hline
			Air-filter                                                          & \begin{tabular}[c]{@{}c@{}}0.483\\ (0.559)\end{tabular} & \begin{tabular}[c]{@{}c@{}}0.552\\ (0.696)\end{tabular} & \begin{tabular}[c]{@{}c@{}}-0.069\\ (0.881)\end{tabular} & 0.141   & \begin{tabular}[c]{@{}c@{}}0.474\\ (0.488)\end{tabular} & \begin{tabular}[c]{@{}c@{}}0.552\\ (0.671)\end{tabular} & \begin{tabular}[c]{@{}c@{}}-0.078\\ (0.876)\end{tabular} & 0.090            & \begin{tabular}[c]{@{}c@{}}0.467\\ (0.539)\end{tabular} & \begin{tabular}[c]{@{}c@{}}0.532\\ (0.635)\end{tabular} & \begin{tabular}[c]{@{}c@{}}-0.065\\ (0.811)\end{tabular} & 0.123   \\ \hline
		\end{tabular}
	}
	\label{Table:NonparametricContrast}
\end{table}

\begin{figure}[!htb]
	\caption{Boxplots of QoL Scores by Intervention Groups}
	\includegraphics[scale=0.3]{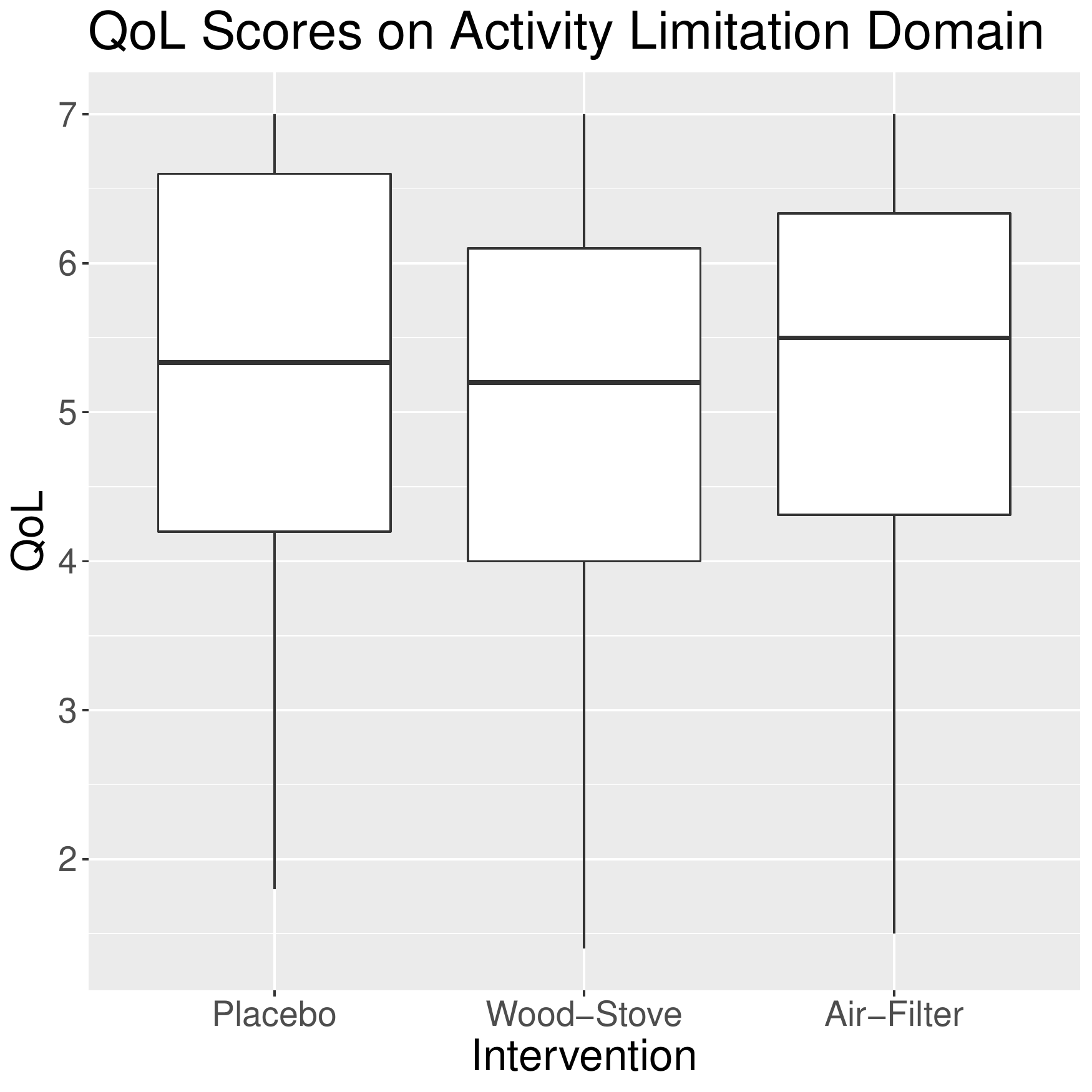}
	\includegraphics[scale=0.3]{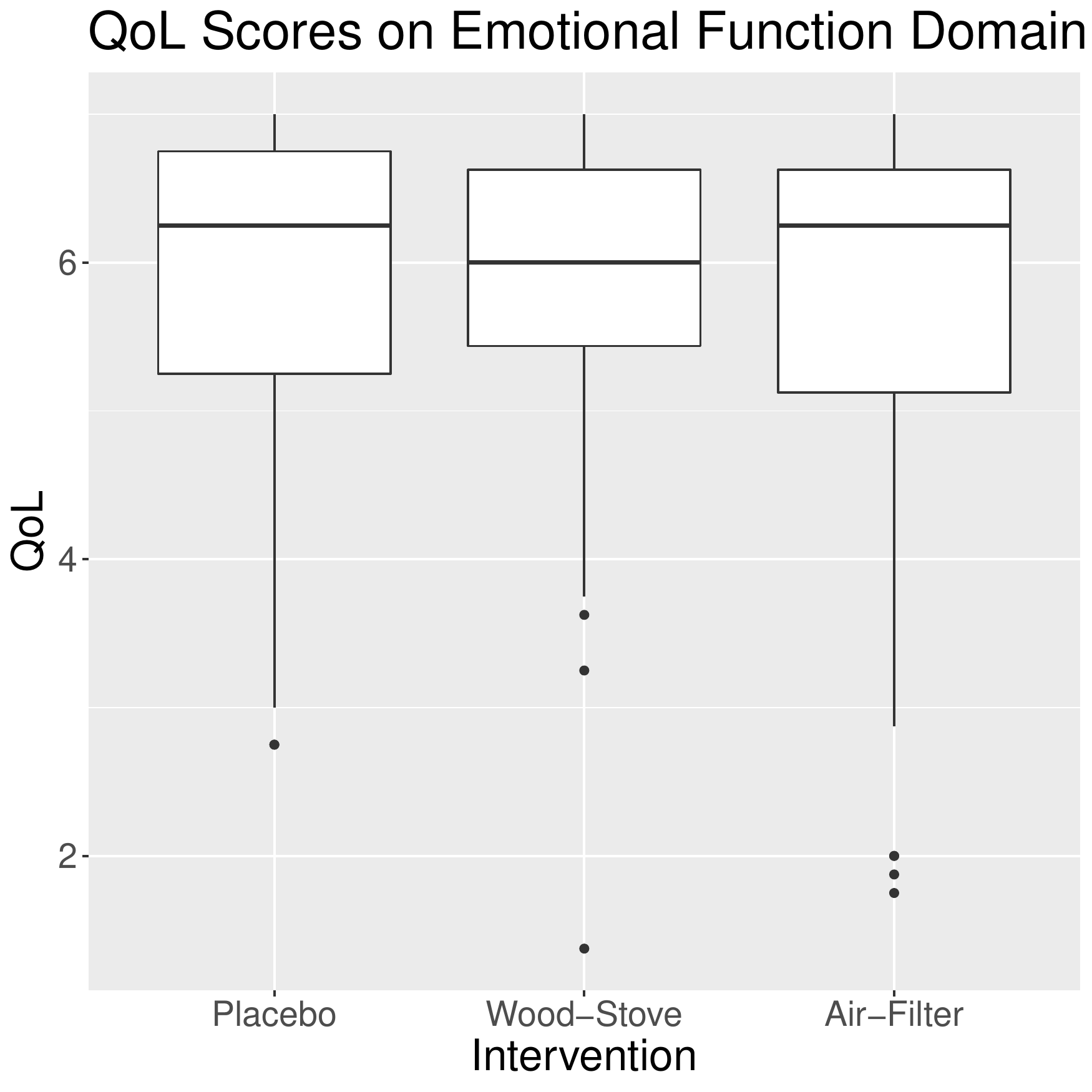}
	\includegraphics[scale=0.3]{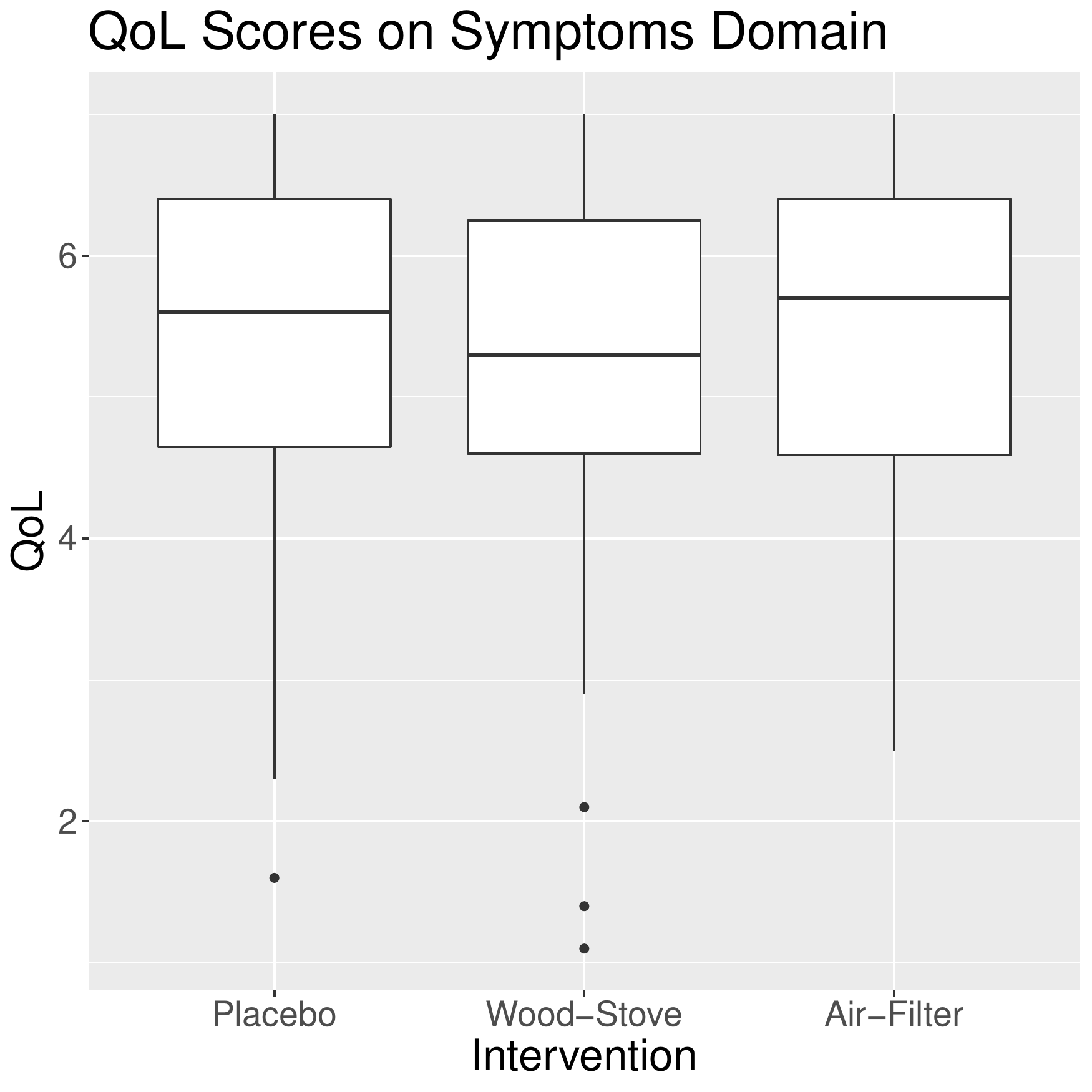}
	\label{Figure:BoxplotsNoTime}
\end{figure}

\subsection{Linear Mixed Effects Model}
\label{Sec:LinearMixedEffects}
For the purpose of comparison, we also analyze the ARTIS data with linear mixed effects model. 
As such, the linear mixed effects model includes fixed effects for intervention assignment and  time.  It also includes an interaction term  to assess effect modification by intervention assignment. Random effects for both the intercept and the slope (i.e. time) are included, which in effect allow each participant to have his/her own intercept and slope describing change in PAQLQ scores from the pre-intervention to post-intervention periods. Let $X_{jlkv}$ denote the observation for the $v^{th}$ visit, in the $k^{th}$ cluster (household), in the $j^{th}$ intervention group and during the $l^{th}$ intervention period.  The assumed mathematical model is 
\begin{equation*}
		X_{jlkv}=\mu+\delta_{j}+\gamma_{l}+(\delta\gamma)_{jl}+u_{l}+R_{jk}+\epsilon_{jlkv},
\end{equation*}
\begin{equation*}
	u_{l}\overset{iid}{\sim}N(0,\sigma_u^2),\quad R_{jk}\overset{iid}{\sim}N(0,\sigma_R^2)\quad \text{ and }\quad \epsilon_{jlkv}\overset{iid}{\sim}N(0,\sigma^2),
\end{equation*}
where $\delta_{j},\gamma_{l}$ and $(\delta\gamma)_{jl}$ are fixed effects for intervention, time and  interaction, respectively, $u_{l}$ and $R_{jk}$ are random effects for time and child, respectively, and $\epsilon_{jlkv}$ is the error term. The linear mixed effects model analysis amounts to utilizes likelihood-based available-case analysis for handling missing data. All analyses are conducted using \texttt{lme} function from \texttt{nlme} package. The  results are displayed in Tables \ref{Table:RandomMixedResults} and \ref{Table:ParametricContrast}. 

From Table \ref{Table:RandomMixedResults}, we observe that there are only significant time effects in all the three domains at level $\alpha=0.05$. {The pre-post detailed results in Table  \ref{Table:ParametricContrast} shows that the mean PAQLQ scores for the activity limitation domain significantly increased  after installing the sham air-filter and upgrading the wood stove. For the emotional function domain, the mean PAQLQ scores significantly increased only after installing the sham air-filter, while  the mean scores increased significantly only after upgrading the wood-stove for the symptoms domain.} 
	
{Though some of the results in Table \ref{Table:NonparametricContrast} and Table \ref{Table:ParametricContrast} are different, they both indicate there is some difference on PAQLQ scores before and after the interventions. The two models focus on different aspects and  they  provide useful complementary information. The nonparametric model gives us an overview of the tendency in responses variables by providing probabilistic statements on effects, whereas the linear mixed effects model gives specific comparisons on mean effects.}

\begin{table}[!htb]
	\caption{Linear Mixed Effects Model Analysis for Domain Variables in ARTIS data}
	\centering
	{
		\begin{tabular}{|c|c|c|c|c|}
			\hline
			\multirow{2}{*}{\begin{tabular}[c]{@{}c@{}}Activity\\ Limitation\end{tabular}} & Effects                                                                  & Intervention        & Time            & Interaction        \\ \cline{2-5} 
			& $p$-value                                                               & 0.666            & <0.001                   & 0.167              \\ \hline
			\multirow{2}{*}{\begin{tabular}[c]{@{}c@{}}Emotional\\ Function\end{tabular}}  & Test                                                                  & Intervention        & Time            & Interaction        \\ \cline{2-5} 
			& $p$-value                                                               & 0.883            & 0.001        & 0.901              \\ \hline
			\multirow{3}{*}{Symptoms}                                                      & Effects                                                                & Intervention        & Time            & Interaction        \\ \cline{2-5} 
			& $p$-value                                                               & 0.764            & 0.003                   & 0.217              \\ \hline
		\end{tabular}	
	}
	\label{Table:RandomMixedResults}
\end{table}

\begin{table}[!htb]
	\centering
	\tiny
	\caption{Tests on Pre-Post PAQLQ Scores by Intervention Group}
	{
	\begin{tabular}{|c|c|c|c|c|c|c|c|c|c|c|c|c|}
		\hline
		PAQLQ Scores       & \multicolumn{4}{c|}{Activity Limitation}                                                                                                                                             & \multicolumn{4}{c|}{Emotional Function}                                                                                                                                              & \multicolumn{4}{c|}{Symptoms}                                                                                                                                                        \\ \hline
		\begin{tabular}[c]{@{}c@{}}Intervention\\ Group\end{tabular} & Pre                                                    & Post                                                   & Diff                                                     & p-value & Pre                                                    & Post                                                   & Diff                                                     & p-value & Pre                                                    & Post                                                   & Diff                                                     & p-value \\ \hline
		Placebo            & \begin{tabular}[c]{@{}c@{}}4.94\\ (0.198)\end{tabular} & \begin{tabular}[c]{@{}c@{}}5.48\\ (0.213)\end{tabular} & \begin{tabular}[c]{@{}c@{}}-0.542\\ (0.182)\end{tabular} & 0.003   & \begin{tabular}[c]{@{}c@{}}5.66\\ (0.174)\end{tabular} & \begin{tabular}[c]{@{}c@{}}6.03\\ (0.160)\end{tabular} & \begin{tabular}[c]{@{}c@{}}-0.367\\ (0.149)\end{tabular} & 0.015   & \begin{tabular}[c]{@{}c@{}}5.27\\ (0.177)\end{tabular} & \begin{tabular}[c]{@{}c@{}}5.53\\ (0.177)\end{tabular} & \begin{tabular}[c]{@{}c@{}}-0.267\\ (0.173)\end{tabular} & 0.123   \\ \hline
		Wood-stove         & \begin{tabular}[c]{@{}c@{}}4.48\\ (0.319)\end{tabular} & \begin{tabular}[c]{@{}c@{}}5.39\\ (0.338)\end{tabular} & \begin{tabular}[c]{@{}c@{}}-0.907\\ (0.298)\end{tabular} & 0.003   & \begin{tabular}[c]{@{}c@{}}5.67\\ (0.279)\end{tabular} & \begin{tabular}[c]{@{}c@{}}6.01\\ (0.255)\end{tabular} & \begin{tabular}[c]{@{}c@{}}-0.333\\ (0.245)\end{tabular} & 0.175   & \begin{tabular}[c]{@{}c@{}}4.79\\ (0.285)\end{tabular} & \begin{tabular}[c]{@{}c@{}}5.63\\ (0.283)\end{tabular} & \begin{tabular}[c]{@{}c@{}}-0.838\\ (0.283)\end{tabular} & 0.003   \\ \hline
		Air-filter         & \begin{tabular}[c]{@{}c@{}}5.03\\ (0.194)\end{tabular} & \begin{tabular}[c]{@{}c@{}}5.29\\ (0.214)\end{tabular} & \begin{tabular}[c]{@{}c@{}}-0.257\\ (0.185)\end{tabular} & 0.166   & \begin{tabular}[c]{@{}c@{}}5.62\\ (0.170)\end{tabular} & \begin{tabular}[c]{@{}c@{}}5.89\\ (0.161)\end{tabular} & \begin{tabular}[c]{@{}c@{}}-0.271\\ (0.151)\end{tabular} & 0.074   & \begin{tabular}[c]{@{}c@{}}5.30\\ (0.173)\end{tabular} & \begin{tabular}[c]{@{}c@{}}5.52\\ (0.179)\end{tabular} & \begin{tabular}[c]{@{}c@{}}-0.217\\ (0.175)\end{tabular} & 0.217   \\ \hline
	\end{tabular}
}
\label{Table:ParametricContrast}
\end{table}

\section{Discussion and Conclusion}
\label{sec:DiscussionConclusion}
Nonparametric methods for the analysis of factorial designs is a topic of permanent interest in statistical research and applications, especially in cases where distributional assumptions of parametric models are not met. However, many of these nonparametric tests only apply when the hypotheses are formulated in terms of the distribution functions, i.e. $H_0^F(\textbf{T}):\textbf{TF}=\bm{0}$, where $\textbf{F}$ and $\textbf{T}=\textbf{C}^\top(\textbf{CC}^\top)^+\textbf{C}$ are the vector of distribution functions and unique projection pn the column space of the contrast matrix $\textbf{C}$. These hypotheses have some advantages. The resulting covariance matrix of the test statistic has quite simple structure with all of its components consistently estimated by functions of some versions of ranks, see \cite{akritas1997nonparametric} and \cite{akritas1997unified} for details. However, these hypotheses are quite restrictive in the sense that designs with heteroscedasticity are not covered and, the tests are not consistent to detect arbitrary alternatives $H_1^F(\textbf{T}):\textbf{TF}\ne\bm{0}$. Of course, lack of interpretability is another factor that contributes to the limitation of nonparametric methods based on distribution functions.

The aforementioned limitations of nonparametric methods are overcome by considering the more general hypotheses $H_0^p:\textbf{Tp}=\bm{0}$ \citep{brunner2017rank}. The work in \cite{brunner2017rank} is a generalization of the Behrens-Fisher problem to nonparametric factorial designs, which provides easily interpretable nonparametric treatment effect size measures. Though the general factorial designs with repeated measures are included in \cite{brunner2017rank}, their methods do not allow clustered data at each time point, not even to mention the more complicated data structures such as partially complete clustered data \citep{cui2020} which could possibly arise due to missing values. In this paper, we generalized \cite{cui2020} from a one-sample clustered data problem to multiple interventions situation. More specifically, nonparametric methods for clustered data in factorial design with pre-post intervention measurements are developed. The proposed nonparametric methods are applicable to designs with heteroscedasticity in the clustered data setting. Also, since elements of $\textbf{p}$ are fixed model quantities, our approach provides numerical and interpretable estimates of effect size measures along with their confidence intervals for further inference. The simulation studies show that the proposed methods preserve the pre-assigned Type-I error rates well and also achieve high powers under fairly reasonable alternatives. 

Though we are able to make intervention comparisons in the factorial layout, our methods are confined to a pre-post intervention design. A generalization of this nonparametric method to more than two intervention periods (time points) is not obvious and entails great difficulties. The main challenge comes from the rather involved structure of the covariance matrix of the effect size estimators. Actually this is a common issue for nonparametric tests that are constructed upon relative effect sizes. Also, as indicated in \cite{cui2020}, a more elaborate weighting scheme in the effect size estimators which includes the intra-cluster dependence can be more effective than the current weighting scheme. We plan to investigate these problems in future researches.

\bibliographystyle{chicago}
\bibliography{reference_new}

\clearpage
\section{Appendix}
\label{sec:Appendix}
\subsection{Covariance Decomposition}
\label{Appendix:CovarianceDecomposition}
{\footnotesize
\begin{align*}
	&\textrm{Cov}(Z_{pqrs},Z_{p'q'rs})=\textrm{Cov}(Z_{rspq},Z_{rsp'q'})\allowdisplaybreaks[1]\\
	=&\textrm{Cov}\bigg(\frac{1}{N_{pq}}\sum_{k=1}^{n_{p}^{(c)}}\sum_{v=1}^{m_{pqk}^{(c)}}F_{rs}(X_{pqkv}^{(c)})-\frac{1}{N_{rs}}\sum_{k=1}^{n_{r}^{(c)}}\sum_{v=1}^{m_{rsk}^{(c)}}F_{pq}(X_{rskv}^{(c)})+\frac{1}{N_{pq}}\sum_{k=1}^{n_{pq}}\sum_{v=1}^{m_{pqk}^{(i)}}F_{rs}(X_{pqkv}^{(i)})-\frac{1}{N_{rs}}\sum_{k=1}^{n_{rs}}\sum_{v=1}^{m_{rsk}^{(i)}}F_{pq}(X_{rskv}^{(i)}),\allowdisplaybreaks[1]\\
	&\quad\frac{1}{N_{p'q'}}\sum_{k=1}^{n_{p'}^{(c)}}\sum_{v=1}^{m_{p'q'k}^{(c)}}F_{rs}(X_{p'q'kv}^{(c)})-\frac{1}{N_{rs}}\sum_{k=1}^{n_{r}^{(c)}}\sum_{v=1}^{m_{rsk}^{(c)}}F_{p'q'}(X_{rskv}^{(c)})+\frac{1}{N_{p'q'}}\sum_{k=1}^{n_{p'q'}}\sum_{v=1}^{m_{p'q'k}^{(i)}}F_{rs}(X_{p'q'kv}^{(i)})-\frac{1}{N_{rs}}\sum_{k=1}^{n_{rs}}\sum_{v=1}^{m_{rsk}^{(i)}}F_{p'q'}(X_{rskv}^{(i)})\bigg)\allowdisplaybreaks[1]\\
	=&\frac{1}{N_{pq}N_{p'q'}}\textrm{Cov}\bigg(\sum_{k=1}^{n_{p}^{(c)}}\sum_{v=1}^{m_{pqk}^{(c)}}F_{rs}(X_{pqkv}^{(c)}),\sum_{k=1}^{n_{p'}^{(c)}}\sum_{v=1}^{m_{p'q'k}^{(c)}}F_{rs}(X_{p'q'kv}^{(c)})\bigg)-\frac{1}{N_{pq}N_{rs}}\textrm{Cov}\bigg(\sum_{k=1}^{n_{p}^{(c)}}\sum_{v=1}^{m_{pqk}^{(c)}}F_{rs}(X_{pqkv}^{(c)}),\sum_{k=1}^{n_{r}^{(c)}}\sum_{v=1}^{m_{rsk}^{(c)}}F_{p'q'}(X_{rskv}^{(c)})\bigg)\allowdisplaybreaks[1]\\
	&+\frac{1}{N_{pq}N_{p'q'}}\textrm{Cov}\bigg(\sum_{k=1}^{n_{p}^{(c)}}\sum_{v=1}^{m_{pqk}^{(c)}}F_{rs}(X_{pqkv}^{(c)}),\sum_{k=1}^{n_{p'q'}}\sum_{v=1}^{m_{p'q'k}^{(i)}}F_{rs}(X_{p'q'kv}^{(i)})\bigg)-\frac{1}{N_{pq}N_{rs}}\textrm{Cov}\bigg(\sum_{k=1}^{n_{p}^{(c)}}\sum_{v=1}^{m_{pqk}^{(c)}}F_{rs}(X_{pqkv}^{(c)}),\sum_{k=1}^{n_{rs}}\sum_{v=1}^{m_{rsk}^{(i)}}F_{p'q'}(X_{rskv}^{(i)})\bigg)\allowdisplaybreaks[1]\\
	&-\frac{1}{N_{rs}N_{p'q'}}\textrm{Cov}\bigg(\sum_{k=1}^{n_{r}^{(c)}}\sum_{v=1}^{m_{rsk}^{(c)}}F_{pq}(X_{rskv}^{(c)}),\sum_{k=1}^{n_{p'}^{(c)}}\sum_{v=1}^{m_{p'q'k}^{(c)}}F_{rs}(X_{p'q'kv}^{(c)})\bigg)+\frac{1}{N_{rs}N_{rs}}\textrm{Cov}\bigg(\sum_{k=1}^{n_{r}^{(c)}}\sum_{v=1}^{m_{rsk}^{(c)}}F_{pq}(X_{rskv}^{(c)}),\sum_{k=1}^{n_{r}^{(c)}}\sum_{v=1}^{m_{rsk}^{(c)}}F_{p'q'}(X_{rskv}^{(c)})\bigg)\allowdisplaybreaks[1]\\
	&-\frac{1}{N_{rs}N_{p'q'}}\textrm{Cov}\bigg(\sum_{k=1}^{n_{r}^{(c)}}\sum_{v=1}^{m_{rsk}^{(c)}}F_{pq}(X_{rskv}^{(c)}),\sum_{k=1}^{n_{p'q'}}\sum_{v=1}^{m_{p'q'k}^{(i)}}F_{rs}(X_{p'q'kv}^{(i)})\bigg)+\frac{1}{N_{rs}N_{rs}}\textrm{Cov}\bigg(\sum_{k=1}^{n_{r}^{(c)}}\sum_{v=1}^{m_{rsk}^{(c)}}F_{pq}(X_{rskv}^{(c)}),\sum_{k=1}^{n_{rs}}\sum_{v=1}^{m_{rsk}^{(i)}}F_{p'q'}(X_{rskv}^{(i)})\bigg)\allowdisplaybreaks[1]\\
	&+\frac{1}{N_{pq}N_{p'q'}}\textrm{Cov}\bigg(\sum_{k=1}^{n_{pq}}\sum_{v=1}^{m_{pqk}^{(i)}}F_{rs}(X_{pqkv}^{(i)}),\sum_{k=1}^{n_{p'}^{(c)}}\sum_{v=1}^{m_{p'q'k}^{(c)}}F_{rs}(X_{p'q'kv}^{(c)})\bigg)-\frac{1}{N_{pq}N_{rs}}\textrm{Cov}\bigg(\sum_{k=1}^{n_{pq}}\sum_{v=1}^{m_{pqk}^{(i)}}F_{rs}(X_{pqkv}^{(i)}),\sum_{k=1}^{n_{r}^{(c)}}\sum_{v=1}^{m_{rsk}^{(c)}}F_{pq'}(X_{rskv}^{(c)})\bigg)\allowdisplaybreaks[1]\\
	&+\frac{1}{N_{pq}N_{p'q'}}\textrm{Cov}\bigg(\sum_{k=1}^{n_{pq}}\sum_{v=1}^{m_{pqk}^{(i)}}F_{rs}(X_{pqkv}^{(i)}),\sum_{k=1}^{n_{p'q'}}\sum_{v=1}^{m_{p'q'k}^{(i)}}F_{rs}(X_{p'q'kv}^{(i)})\bigg)-\frac{1}{N_{pq}N_{rs}}\textrm{Cov}\bigg(\sum_{k=1}^{n_{pq}}\sum_{v=1}^{m_{pqk}^{(i)}}F_{rs}(X_{pqkv}^{(i)}),\sum_{k=1}^{n_{rs}}\sum_{v=1}^{m_{rsk}^{(i)}}F_{p'q'}(X_{rskv}^{(i)})\bigg)\allowdisplaybreaks[1]\\
	&-\frac{1}{N_{rs}N_{p'q'}}\textrm{Cov}\bigg(\sum_{k=1}^{n_{rs}}\sum_{v=1}^{m_{rsk}^{(i)}}F_{pq}(X_{rskv}^{(i)}),\sum_{k=1}^{n_{p'}^{(c)}}\sum_{v=1}^{m_{p'q'k}^{(c)}}F_{rs}(X_{p'q'kv}^{(c)})\bigg)+\frac{1}{N_{rs}N_{rs}}\textrm{Cov}\bigg(\sum_{k=1}^{n_{rs}}\sum_{v=1}^{m_{rsk}^{(i)}}F_{rs}(X_{pqkv}^{(i)}),\sum_{k=1}^{n_{r}^{(c)}}\sum_{v=1}^{m_{rsk}^{(c)}}F_{p'q'}(X_{rskv}^{(c)})\bigg)\allowdisplaybreaks[1]\\
	&-\frac{1}{N_{rs}N_{p'q'}}\textrm{Cov}\bigg(\sum_{k=1}^{n_{rs}}\sum_{v=1}^{m_{rsk}^{(i)}}F_{pq}(X_{rskv}^{(i)}),\sum_{k=1}^{n_{p'q'}}\sum_{v=1}^{m_{p'q'k}^{(i)}}F_{rs}(X_{p'q'kv}^{(i)})\bigg)+\frac{1}{N_{rs}N_{rs}}\textrm{Cov}\bigg(\sum_{k=1}^{n_{rs}}\sum_{v=1}^{m_{rsk}^{(i)}}F_{pq}(X_{rskv}^{(i)}),\sum_{k=1}^{n_{rs}}\sum_{v=1}^{m_{rsk}^{(i)}}F_{p'q'}(X_{rskv}^{(i)})\bigg)\allowdisplaybreaks[1]\\
	&:=\frac{1}{N^2}\bigg[C_{1}-C_{2}+C_{3}-C_{4}-C_{5}+C_{6}-C_{7}+C_{8}+C_{9}-C_{10}+C_{11}-C_{12}-C_{13}+C_{14}-C_{15}+C_{16}\bigg]\\
\end{align*}}

\end{document}